\newacronym{ADAS}{ADAS}{advanced driver-assistance system}
\newacronym{TBP}{TBP}{time-bandwidth product}
\newacronym{CWS}{CWS}{collision warning system}
\newacronym{TTC}{TTC}{time-to-collision}
\newacronym{SNR}{SNR}{signal-to-noise ratio}
\newacronym{MLE}{MLE}{maximum likelihood estimation}
\newacronym{TWDL}{TWDL}{total wrong decision loss}
\newacronym{PWDL}{PWDL}{point wrong decision loss}
\newacronym{CRLB}{CRLB}{Cramer-Rao lower bound}
\newacronym{FMCW}{FMCW}{frequency modulated continuous wave}
\newacronym{MF}{MF}{matched filtering}
\newacronym{GLRT}{GLRT}{generalized likelihood ratio test}
\newacronym{FIM}{FIM}{Fisher information matrix}
\newacronym{MTWDL}{MTWDL}{minimal total wrong decision loss}
\newacronym{OFDM}{OFDM}{orthogonal frequency division multiplexing}
\newacronym{MSE}{MSE}{mean square error}
\DeclarePairedDelimiter\abs{\lvert}{\rvert}%
\DeclarePairedDelimiter\norm{\lVert}{\rVert}%
\let\oldabs\abs
\def\abs{\@ifstar{\oldabs}{\oldabs*}}
\let\oldnorm\norm
\def\norm{\@ifstar{\oldnorm}{\oldnorm*}}
\renewcommand{\fnum@figure}{Fig. \thefigure}
\newcommand{\diag}{\mbox{diag}}
\newtheorem{theorem}{Theorem}
\newtheorem{prop}{Proposition}
\begin{document}
	\title{Designing the Waveform Bandwidth and Time Duration of Automotive Radars for Better Collision Warning Performance}
	
	\author{
		Hang~Ruan, 
		Yimin~Liu,~\IEEEmembership{Member,~IEEE,}
		Tianyao~Huang,
		and~Xiqin~Wang
		\thanks{
			H. Ruan, Y. Liu, T. Huang and X. Wang are with the Department of Electronic
			Engineering, Tsinghua University, Beijing, 100084 China (email:
			ruanh15@mails.tsinghua.edu.cn; yiminliu@tsinghua.edu.cn; huangtianyao@tsinghua.edu.cn; wangxq\_ee@tsinghua.edu.cn).
			Y. Liu is the corresponding author.
		}
		\thanks{
			This work was supported by the National Natural Science Foundation of China (Grant No. 61801258).
		}
		\thanks{
			Some preliminary results of this work were submitted the IEEE Radar Conference 2020.
		}
	}
	
	\maketitle
	
	\begin{abstract}
		Automotive radar is a key component in an \gls{ADAS}
		. The increasing number of radars implemented in vehicles makes interference between them a noteworthy issue. One method of interference mitigation is to limit the \gls{TBP} of radar waveforms. However, the problems of how much \gls{TBP} is necessary and how to optimally utilize the limited \gls{TBP} have not been addressed. We take \gls{CWS} as an example and propose a method of designing the radar waveform parameters oriented by the  performance of \gls{CWS}. We propose a metric to quantify the \gls{CWS} performance and study how the radar waveform parameters (bandwidth and duration) influence this metric. Then, the waveform parameters are designed with a limit on the \gls{TBP} to optimize the system performance. Numerical results show that the proposed design outperforms the state-of-the-art parameter settings in terms of system performance and resource or energy efficiency.
	\end{abstract}
	\vspace{-0.3cm}
	\begin{IEEEkeywords}
		Automotive radar, interference mitigation, collision warning system, time-bandwidth product
	\end{IEEEkeywords}
	
	
	\section{Introduction}
	\label{sec:intro}
	%
	%
	%
	%
	
	An \gls{ADAS} is designed to promote quality of driving experience, including safety, comfort and fuel economy \cite{shaout2011advanced}. There are different types of \gls{ADAS} with varying purposes. Collision warning/collision avoidance (CW/CA) systems aim to detect potential collisions with objects in front of the vehicle and take corresponding actions, and adaptive cruise control (ACC) systems focus on speed and distance control in the car-following process \cite{vahidi2003research}.
	For such systems to work, a vehicle is equipped with sensors to obtain information from its surrounding. Automotive radar plays an indispensable role among these sensors due to its high measurement accuracy, long detection range and robustness in different weather situations \cite{mukhtar2015vehicle}. 
	
	
	
	A radar measures the range and velocity of object in the front, which are two essential inputs for an \gls{ADAS}. In a \gls{CWS}, indices, such as \gls{TTC} and warning distance, are calculated with these inputs and the system decides whether to give a warning accordingly \cite{lee2016real}. In an ACC system, a control is generated based on the measurements
	\cite{li2011model}.
	
	The range and velocity are measured by estimating the delay and Doppler shift of the received signal \cite{skolnik1962introduction}. Besides high \gls{SNR}, there are requirements on the bandwidth and duration of radar waveform to obtain accurate estimates. The delay can be obtained through pulse compression and high accuracy of delay requires great bandwidth of the waveform. The Doppler shift can be obtained through coherent processing of a pulse sequence and high accuracy of velocity requires long duration of the waveform \cite{skolnik1962introduction}.
	
	To ensure the performance of an \gls{ADAS}, one essential radar requirement is to obtain accurate measurements. As introduced above, range measurement with high resolution and accuracy demands a wide spectrum, while velocity measurement demands a long duration \cite{rohling2001waveform}. In other words, a radar requires adequate electromagnetic resources in both the frequency and time domains to obtain reliable measurements. With the popularization of automotive radars, however, it will be common that multiple radars operate at the same time and in the same neighborhood \cite{kunert2012eu}. The wider spectrum and longer duration of a single radar waveforms will result in a greater possibility of spectral conflict between radars, resulting in issues such as an increased noise floor 
	and ghost targets \cite{Brooker2007Mutual,goppelt2010automotive}. These problems degrade radar's performance in estimation and detection and this further leads to performance degradation of the \gls{ADAS}. Therefore, countermeasures have to be implemented to mitigate interference, which can be classified into the following two schemes shown in Fig.~\ref{fig:schemes}:
	
	\begin{enumerate}
		\item Overlapped sharing scheme: The entire electromagnetic resources, in the frequency and time domains, are divided into several subsets. Each subset contains part of the whole bandwidth and time slots with no mutual overlap. Each radar uses some of the subsets according to a certain strategy. In this case, spectral conflict is likely to occur due to lack of coordination. Metrics such as outage probability are studied to evaluate the scheme \cite{braun2013co-channel, alhourani2018stochastic}. \Copy{polarization}{The interference can be further mitigated by implementing interference cancellation, beamforming and polarization \cite{Brooker2007Mutual, fischer2011minimizing, kunert2012eu}. These techniques, however, increase the signal processing complexity, increase the hardware cost \cite{sun2015interference} and achieve only limited SINR improvement \cite{kunert2012eu}.}
		\item Non-overlapped allocation scheme: \Copy{radarmac}{Centralized control is assumed to be feasible in this scheme (such control may be realized through vehicle-to-vehicle (V2V) and vehicle-to-infrastructure (V2I) communications \cite{papadimitratos2009vehicular}), and the entire resources (in the frequency and time domains) are allocated to radars in a conflict-free way according to individual demands \cite{ruan2016sharing, khoury2016radarmac,aydogdu2019radar,aydogdu2019radarchat}. Ideally, spectral interference can be suppressed or even avoided. For example, \cite{khoury2016radarmac} proposes a mechanism called RadarMAC, in which a control center collects the positions and trajectories of vehicles in some region first and then allocate the resources to vehicles without overlap.} 
	\end{enumerate}
	
	\begin{figure}[t]
		\begin{minipage}[b]{1.0\linewidth}
			\centering
			\centerline{\includegraphics[width=8.8cm]{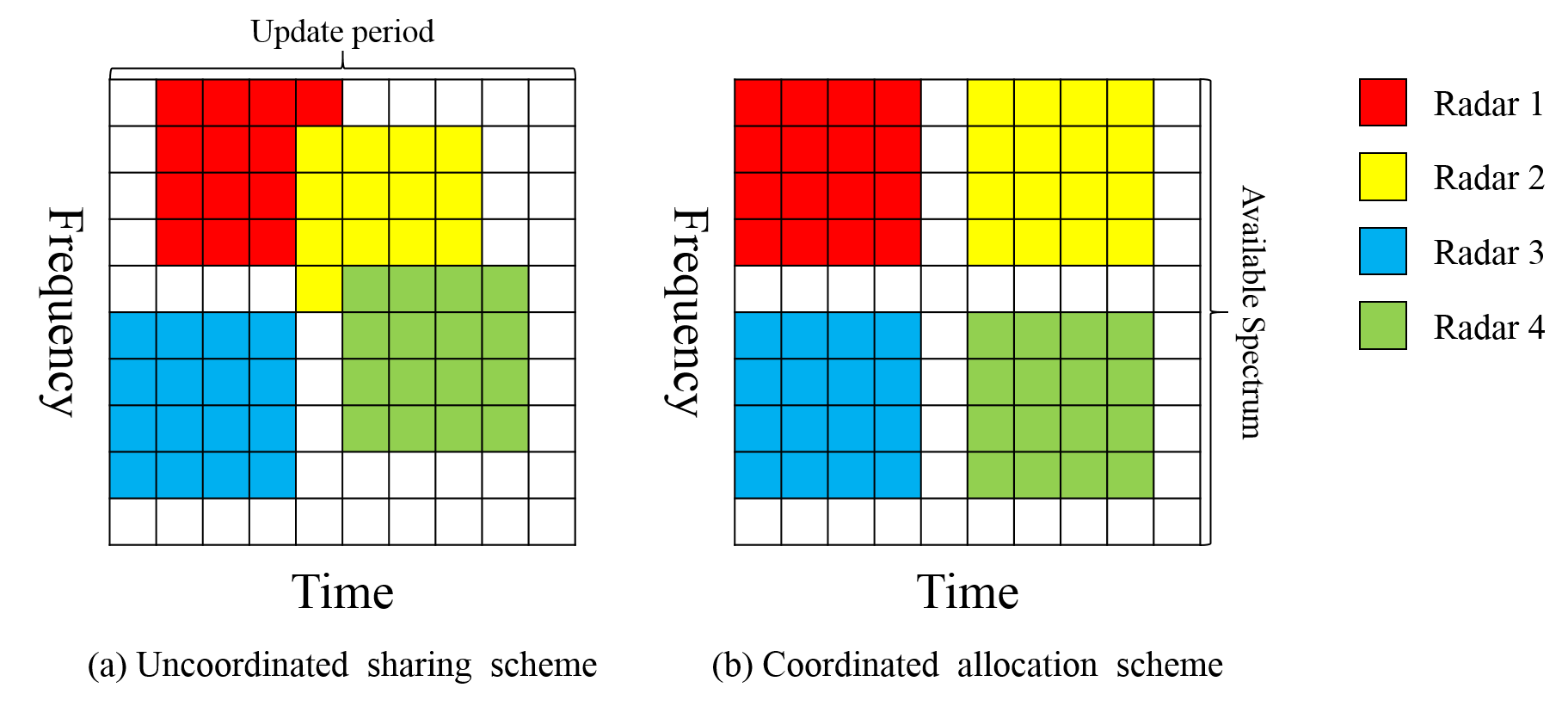}}
		\end{minipage}
		\caption{A sketch of (a) the uncoordinated sharing scheme and (b) the coordinated allocation scheme. In the coordinated scheme, guard intervals are reserved in both the time and frequency domains to avoid signal overlap due to time delay and Doppler shift, respectively.}
		\label{fig:schemes}
		\vspace{-0.4cm}
	\end{figure}
	
	\Copy{TBP_quantify}{For both sharing and allocation schemes, the \gls{TBP} of the radar waveform is a key parameter since it quantifies how many resources in the frequency and time domains are occupied by the radar. For the sharing scheme, greater \gls{TBP} occupied by one radar leads to a greater probability of spectral conflict \cite{braun2013co-channel,alhourani2018stochastic}, while for allocation, greater \gls{TBP} occupied leads to fewer radars that can operate in the same place and at the same time without interference \cite{khoury2016radarmac,aydogdu2019radar}. Therefore, the interference problem can be mitigated if the \gls{TBP} of radar waveform is as small as possible, or in other words, the radar occupies as few electromagnetic resources as possible. This result motivates us to take resource efficiency into consideration when designing radar waveform.}
	
	Generally speaking, the existing studies on radar waveform design can be classified into waveform parameter design and baseband waveform design. As for waveform parameter design, parameters are designed to meet certain requirements on radar performance. One typical example is that requirements on measuring (e.g., range and velocity resolution) are proposed and the corresponding parameters (e.g., bandwidth and duration) are designed accordingly \cite{russell1997millimeter,hasch2012millimeter,sit2018bpsk,yi201924}. As for baseband waveform design, the \gls{FMCW} is widely used in due to its low transmit power \cite{folster2005automotive,forstner200877ghz}, while the \gls{OFDM} is another alternative due to its flexibility in waveform optimization and convenience in integration with communication systems \cite{braun2009parametrization,turlapaty2014range}. \Copy{existing_optimization}{The baseband waveform can be further optimized to improve radar's performance. Metrics, including mutual information \cite{bell1993information,leshem2007information} and
		\gls{CRLB} \cite{lenz2018joint,turlapaty2014range} are used for evaluating radar's performance in the optimization problems. The ambiguity function is also used for optimization since it affects radar's range and velocity resolution \cite{knapp1976gneralized,chen2008mimo}.
	
	Moreover, when the electromagnetic resources occupied by a radar, quantified by \gls{TBP}, are limited, there is a tradeoff of estimation accuracy between range and velocity. Authors in \cite{stein2014information,lenz2018joint} express the tradeoff through multiplying the \gls{CRLB} of range and velocity by a weighting matrix. The authors in \cite{lenz2018joint} further explore the Pareto-optimal region of delay-Doppler estimation by changing the weighting matrix.} \Copy{compare_lenz}{However, the tradeoff in \cite{lenz2018joint} is made with a limited on the transmit power instead of TBP. Moreover, it is not addressed how to design the weighting matrix and find an optimal tradeoff. In this respect, the relationship between radar performance and system performance expresses one way of tradeoff. The main purpose of our paper is to express the tradeoff in terms of system and to design the waveform by finding the optimum of tradeoff.}

	\Copy{conventional_our}{To summarize, many existing studies on waveform design are targeted to specific radar performances, meaning that they consider the radar as a separated sensor from the \gls{ADAS}. They do not jointly consider how the radar performance influences the system performance. Therefore, the electromagnetic resources are not fully exploited in terms of system performance. When the issue of interference among automotive radars becomes increasingly severe, however, it is necessary for the radars to utilize the limited  resources more efficiently. Understanding the relationship between radar performance and system performance is a key to improving the efficiency.
	
	\Copy{outline}{Based on the above discussion, we propose a novel method of designing radar waveform parameters oriented to the performance of \gls{CWS}. 
		We evaluate the system performance by introducing the \gls{MTWDL} as the metric, which quantifies the costs of the system when a warning is triggered wrongly in a safe case or is not triggered timely in a dangerous situation. We then analyze the relationship between this metric and radar parameters (bandwidth and duration), based on which waveform parameters are designed through minimizing the \gls{MTWDL} of the system under the \gls{TBP} constraint.
		Compared with the existing design method, our method enables the radar to use fewer electromagnetic resources while guaranteeing the \gls{CWS} to function normally.}}
	\Copy{contribution}{The main contributions of this paper are as follows:

	\begin{enumerate}
		\item We introduce \gls{MTWDL} as the performance metric of \gls{CWS} and analyze its relationship with radar performance. Analysis reveals that the system performance relies on what we defined as range-velocity joint error index (called error index for short), which is a weighted sum of variances of range and velocity estimation. This result indicates that in order to improve the performance of \gls{CWS}, we should minimize the error index, yielding the waveform design criterion. 
		\item Using the above criterion, we  propose a method optimizing the waveform parameters 
		under the constraint of \gls{TBP}, which limits the resources consumed by radar. Both theoretical and numerical results show that our optimized parameter design outperforms the state-of-art-design in terms of system performance, resource efficiency and energy efficiency.
		\item To quantify the relationship between system performance and waveform parameters, we evaluate the estimation error of radar. To this end, we derive asymptotic the \gls{CRLB} of range and velocity estimation using the \gls{FMCW} in a multi-target scenario. 
	\end{enumerate}}
	
	In summary, the proposed method raises waveform design to the level of the system that radar serves. By mining the relationship between system performance and waveform parameters, this method enables the radar to utilize the limited electromagnetic resources more efficiently and has the potential to mitigate the tense resources. Our methodology can be applied to waveform design of radar in other systems, especially those also facing the problem of resource shortage.  
	
	The remainder of this paper is arranged as follows: In Section \ref{sec:model}, we formulate the signal model, derive \gls{CRLB} of range and velocity and study the distribution of estimation errors.
	In Section \ref{sec:performance}, we specify the collision warning system in this study, including its hypothesis, decision rule and derive the performance metric. In Section \ref{sec:design}, we propose a method of waveform parameter design to optimize the system performance. The numerical results are shown in Section \ref{sec:results}. Section \ref{sec:conclusion} presents the conclusion and outlook.
	
	
	\section{System and signal model}
	\label{sec:model}
	The procedure of a radar-based \gls{CWS} can be divided into three steps: detecting object, estimating parameters and making decisions \cite{HEIDENREICH20133400}, as shown in Fig.~\ref{fig:diagram}. More detailed explanations are given as follows:
	\begin{enumerate}
		\item Object detection: The system tells whether there is an object in the front according to the received signal of radar. \Copy{object_detection}{A common method is to compare the amplitude of signal after \gls{MF} with a certain threshold \cite{kay1993fundamentals}. If the amplitude is larger than the threshold, the system tells that an object is detected. The performance of object detection mainly depends on the \gls{SNR}.}
		\item Parameter estimation: If an object in the front is detected, the system estimate parameters of this object, e.g., range and relative velocity, with the received signal. \Gls{MLE} is usually used for this purpose. The bandwidth and duration of waveform influence the accuracy of measurements.
		\item Decision making: Based on the parameter estimates, the system decides whether the current situation is safe (negative hypothesis $\mathcal{H}_0$) or threatening (positive hypothesis $\mathcal{H}_1$) following a decision rule. If the situation is decided to be threatening ($\mathcal{H}_1$), the system will give a warning to the driver and the driver can response (e.g., brake) accordingly. Otherwise ($\mathcal{H}_0$), the system will not intervene on the driver's behaviour. The rule of making decisions will be discussed in Section \ref{sec:performance}.
	\end{enumerate}
	
	
	\begin{figure}[t]
		\begin{minipage}[b]{1.0\linewidth}
			\centering
			\centerline{\includegraphics[width=8.0cm]{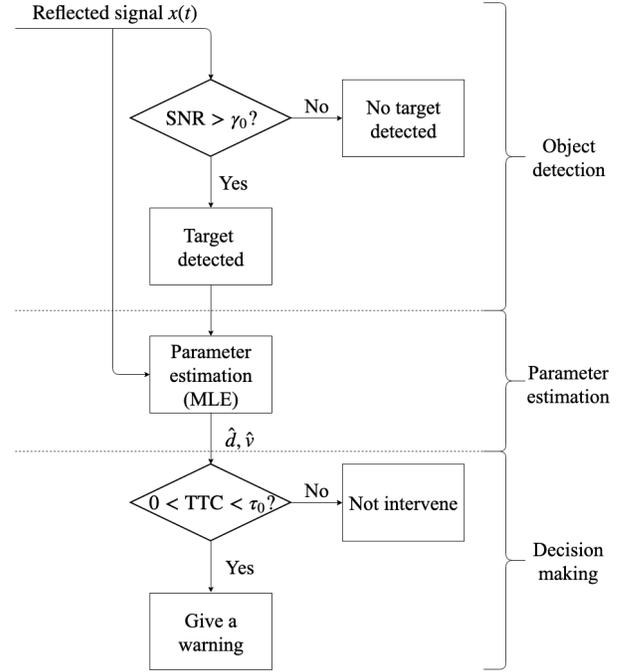}}
		\end{minipage}
		\caption{A diagram of a radar-based \gls{CWS}. The two dashed lines partition the three steps: object detection, parameter estimation and decision making.}
		\label{fig:diagram}
		\vspace{-0.4cm}
	\end{figure}
	
	
	In this paper, we assume that an object is already detected in the front and focus on the second and third step. These two steps are analyzed in this section and Section \ref{sec:performance}, respectively.
	
	\Copy{purpose_distribution}{Our purpose is to design waveform parameters that optimize \gls{CWS} performance, which relies on the distribution of radar's estimation errors. 
		Therefore, we study the error distribution using the following steps}: First, we present the signal model of an \gls{FMCW} radar in Section \ref{subsec:signal_model}. Then, we derive the \gls{CRLB} for range and velocity estimation in Section \ref{subsec:crlb}, which facilitates the analysis on the error distribution in Section \ref{subsec:distribution}.
	
	\vspace{-0.3cm}
	\subsection{Signal model}
	\label{subsec:signal_model}
	
	We analyze an \gls{FMCW} radar using linear modulation. \Copy{no_int}{It is assumed that the radars are operating in the non-overlapped coordination scheme, so no interference exists among them.} 
	
	
	The \gls{FMCW} waveform is divided into $M$ chirps and the pulse repetition interval (PRI) is $T_0$, resulting in the duration as $T = MT_0$. Within each chirp, the frequency of the waveform sweeps linearly from the initial frequency $f_0$ to $f_0+W$, where $W$ is the waveform bandwidth. We define the chirp rate as $\kappa = W/T_0$ and the amplitude is a real number $A$. Thus, the $m$-th ($m=0,1,\dots,M-1$) chirp can be written as
	\begin{equation} \label{eqn:transmit}
	\begin{aligned}
	s_m(t) = A e^ {j2\pi\left(f_0 t_m +\frac 12 \kappa t_m^2 \right)},  mT_0\leq t < (m+1)T_0,
	\end{aligned}
	\end{equation}
	where we define $t_m = t-(m-1)T_0$.
	
	The transmitted waveform is reflected by $K$ scatterers. The echo from the $k$-th $(k=1,2,\dots,K)$ scatterer is \cite{stove1992linear}
	\vspace{-0.1cm}
	\begin{equation} \label{eqn:echo}
	x_{k,m}(t) = \alpha_k s_m\left(t-{2(d_k+ v_k t)}/{c}\right),
	\vspace{-0.1cm}
	\end{equation}
	where $d_k$ and $v_k$ are the range and relative velocity of the $k$-th scatterer, and $\alpha_k$ is the complex reflection intensity. The speed of light is denoted by $c$. The received signal within the $m$-th chirp, $x_m(t)$, is the sum of $K$ echoes and noise, i.e.,
	\vspace{-0.1cm}
	\begin{equation} \label{eqn:receive_signal}
	x_m(t) = \sum_{k=1}^{K} x_{k,m}(t)+w_m(t),
	\vspace{-0.1cm}
	\end{equation}
	where $w_m(t)$ is the additive white Gaussian noise with power spectral density of $N_0$. \Copy{clutter}{Note that the clutter is a key component in automotive radars. It is generated by the reflection of road surfaces and man-made structures on roads and 
	can be treated as several stationary scatterers \cite{hyun2010two}, meaning that it is included into the multi-scatterer signal model \eqref{eqn:receive_signal}.} 
	
	Upon mixing $x_m(t)$ with a local oscillator signal $s_{m,0}(t)=s_m(t)/A$ and low-pass filtering, the intermediate frequency (IF) signal $y_m(t)$ is obtained. With narrow-band and low-velocity assumption, the IF signal can be written as \cite{stove1992linear}
	\vspace{-0.1cm}
	\begin{equation} \label{eqn:if_signal}
	\begin{aligned}
	y_m(t) \!=\! \sum_{k=1}^K A\alpha_k^* e^{j\phi_k}e^{j2\pi  f_{v,k} T_{0}m+(f_{d,k}+f_{v,k})t_m}\!+\!w'_m(t).
	\end{aligned}
	\vspace{-0.1cm}
	\end{equation}
	The beat frequency $f_{d,k}$ and the Doppler shift $f_{v,k}$ are
	\vspace{-0.1cm}
	\begin{equation} \label{eqn:fd_fv}
	f_{d,k}={2 \kappa d_k }/{c},
	f_{v,k} = {2f_0 v_k }/{c}.
	\vspace{-0.1cm}
	\end{equation}
	The phase $\phi_k = {4\pi f_0 d_k}/{c} $ is time-invariant. The operator $^*$ yields the complex conjugate of a number. The noise becomes $w_m'(t) = s_{m,0}(t)w_m^*(t)$. The IF signal $y_m(t)$ is then sampled at rate $f_s$, 
	yielding the discrete-time signal as follows:
	\vspace{-0.2cm}
	\Copy{y_nm}{
		\begin{equation} \label{eqn:y_nm}
		\begin{aligned}
		y[n,\!m] 
		= \sum_{k=1}^K \! A\alpha_k^* e^{j\phi_k} \!e^{j2\pi  \left(f_{v,k} T_{0}m + (f_{d,k}\!+\!f_{v,k})n/\!f_s\right)}
		\!+\!w'[n,\!m],
		\end{aligned}
		\end{equation}}
	where $n=0,1,\dots,N-1$ denotes the fast-time index. The total number of samples within one chirp is $N=f_s T_0$. Here, we assume that $f_s T_0$ is an integer. The variance of the noise $w'[n,m]$ is $\sigma_w^2 = f_s N_0$.
	The following manipulations are conducted on \eqref{eqn:y_nm}: \Copy{mani_y}{The coefficients are merged as $A_k'=A\alpha_k^* e^{j\phi_k}$. We use $b_k$ and $\psi_k$ to denote the amplitude and phase of $A_k'$, i.e., $A_k' = b_k e^{j\psi_k}$.} 
	We define the two-dimensional frequencies
	\vspace{-0.1cm}
	\begin{equation} \label{eqn:freq_2d}
	f_{1,k}=(f_{d,k}+f_{v,k})/f_s,
	f_{2,k}=f_{v,k} T_0.
	\end{equation}
	\vspace{-0.1cm}
	Thus, the samples shown in \eqref{eqn:y_nm} can be rewritten as
	\vspace{-0.1cm}
	\begin{equation} \label{eqn:y_nm2}
	y[n,m] = \sum_{k=1}^K b_k e^{j\psi_k} e^{j2\pi  \left(f_{1,k}n+f_{2,k}m\right)}+w'[n,m].
	\vspace{-0.1cm}
	\end{equation}	
	The deterministic parameters to be estimated are formulated as $\boldsymbol{\xi}=[\boldsymbol{\xi}_1^T,\boldsymbol{\xi}_2^T,\dots,\boldsymbol{\xi}_K^T]^T$, where $\boldsymbol{\xi}_k=[b_k,\psi_k,f_{1,k},f_{2,k}]^T$. Here, the operator $^T$ yields the transpose of a matrix. 
	After $\boldsymbol{\xi}$ is estimated, we further obtain the estimate of $\bm{\theta}_k = [d_k,v_k]^T$ according to (\ref{eqn:fd_fv}) and (\ref{eqn:freq_2d}). \Copy{phase_d}{Note that although $\phi_k$ contains the range $d_k$, it cannot be used to estimate $d_k$ since the complex reflection intensity $\alpha_k$ also introduces an unknown phase, which cannot be separated from $\phi_k$. Thus, we use the frequencies $f_{1,k}$ and $f_{2,k}$ to estimate $\bm{\theta}_k$.}
	
	As addressed in the beginning of this section, one common method of estimating $\bm{\theta}_k$ is \gls{MLE}. \Copy{MLE}{The process of \gls{MLE} is as follows: we first conduct two-dimensional fast Fourier transform \cite{hua1992estimating} on $y[n.m]$ and conduct detection in all the resolution cells. For each resolution cell that a target (numbered as $k$) is detected, we search the frequencies $\hat{f}_{1,k}$ and $\hat{f}_{2,k}$ within this cell such that the following expression
		\begin{equation}
		\left|\sum_{n=0}^{N-1}\sum_{m=0}^{M-1} y[n,m] e^{-j2\pi  \left(\hat{f}_{1,k}n+\hat{f}_{2,k}m\right)}\right|
		\end{equation}
		reaches its maximum. The corresponding $\hat{f}_{1,k}$ and $\hat{f}_{2,k}$ are the estimate of two-dimensional frequencies. Estimate of range and velocity, $\hat{d}_k$ and $\hat{v}_k$, is obtained based on their relationship with the two frequencies given by \eqref{eqn:fd_fv} and \eqref{eqn:freq_2d}.}
	
	\vspace{-0.3cm}
	\subsection{CRLB for range and velocity estimation}
	\label{subsec:crlb} 
	Due to the existence of noise, estimation error is inevitable. Thus, the estimate can be written as
	\vspace{-0.1cm}
	\begin{equation}
	\hat{\bm{\theta}}_k = \bm{\theta}_k + \bm{\varepsilon}_k,
	\vspace{-0.1cm}
	\end{equation}
	where $\hat{\bm{\theta}}_k = [\hat{d}_k, \hat{v}_k]^T$ denotes the estimate of range and velocity, and $\bm{\varepsilon}_k = [\varepsilon_{d,k},\varepsilon_{v,k}]^T$ denotes the error. 
	The \gls{CRLB} expresses a lower bound on the variance of unbiased estimators of a parameter \cite{kay1993fundamentals} and we use it to evaluate the estimation error. 
	The \gls{CRLB} for $\bm{\theta}_k$ is presented as follows:
	\Copy{prop1}{
		\begin{prop}
			\label{prop:crlb}
			When $M$ and $N$ are sufficiently large, errors of estimating $\bm{\theta}_k$ for different $k$ are asymptotically uncorrelated and the \gls{CRLB} for estimating $\bm{\theta}_k$ can be approximated as 
			\begin{equation} \label{eqn:crlb_theta_k}
			\begin{aligned}
			\mathbf{B}_{\theta,k}=
			\frac{3c^2}{8\pi^2 \gamma_k} \mathrm{diag}\left(\frac{1}{W^2}, \frac{1}{f_0^2 T^2}\right),			
			\end{aligned}
			\end{equation}
			where $\gamma_k = {\abs{\alpha_k}^2 A^2 T}/{N_0}$ is the \gls{SNR} after \gls{MF}.
		\end{prop}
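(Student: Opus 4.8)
The plan is to build the \gls{FIM} of the full parameter vector $\boldsymbol{\xi}$ for the Gaussian model \eqref{eqn:y_nm2}, show it is asymptotically block‑diagonal across targets, invert the per‑target block to get the \gls{CRLB} for $(f_{1,k},f_{2,k})$, and then propagate that bound through the affine, invertible change of variables $(f_{1,k},f_{2,k})\mapsto\boldsymbol{\theta}_k=(d_k,v_k)^T$ prescribed by \eqref{eqn:fd_fv}--\eqref{eqn:freq_2d}. Writing $\mu[n,m]=\sum_k b_k e^{j\psi_k}e^{j2\pi(f_{1,k}n+f_{2,k}m)}$ and using the standard expression for the \gls{FIM} of a deterministic signal in circular white Gaussian noise of variance $\sigma_w^2$, $[\mathbf{I}(\boldsymbol{\xi})]_{ab}=\tfrac{2}{\sigma_w^2}\,\mathrm{Re}\sum_{n,m}\tfrac{\partial\mu^*[n,m]}{\partial\xi_a}\tfrac{\partial\mu[n,m]}{\partial\xi_b}$, reduces everything to evaluating sums of the derivatives of $\mu$ with respect to $b_k,\psi_k,f_{1,k},f_{2,k}$.

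The first substantive step is the decoupling. Computing the four derivatives shows that the amplitude $b_k$ separates from $(\psi_k,f_{1,k},f_{2,k})$ because the relevant real parts vanish, and that every cross‑target entry $(k\neq k')$ is a sum $\sum_{n,m}p(n,m)\,e^{j2\pi[(f_{1,k}-f_{1,k'})n+(f_{2,k}-f_{2,k'})m]}$ with $p$ a low‑degree monomial. For resolvable targets these frequency separations are bounded away from zero, so a summation‑by‑parts / Dirichlet‑kernel estimate bounds each such sum by a quantity that grows strictly slower in $N,M$ than the matching diagonal entry; after normalization the cross blocks (and hence the off‑diagonal blocks of $\mathbf{I}(\boldsymbol{\xi})^{-1}$) vanish, which is exactly the asymptotic‑uncorrelatedness claim, and from here on one may work target by target.

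Next I would handle the per‑target block. Once $b_k$ is dropped, the remaining $3\times3$ block in $(\psi_k,f_{1,k},f_{2,k})$ has entries built from the elementary power sums $\sum_n 1,\ \sum_n n,\ \sum_n n^2$ and their $m$‑counterparts. Taking the Schur complement with respect to the nuisance phase $\psi_k$ re‑centers the indices: $\sum_n n^2$ is replaced by $N\,\mathrm{Var}(n)=N(N^2-1)/12$, and — crucially — the $f_{1,k}$–$f_{2,k}$ cross term is killed \emph{exactly}, since on the product grid $\sum_{n,m}(n-\bar n)(m-\bar m)=(\sum_n(n-\bar n))(\sum_m(m-\bar m))=0$. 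Inverting the resulting diagonal $2\times2$ matrix and using $b_k^2=A^2|\alpha_k|^2$, $\sigma_w^2=f_sN_0$, $NM=f_sT$ gives $\mathrm{Var}(\hat f_{1,k})\gtrsim\frac{6}{(2\pi)^2\gamma_k N^2}$ and $\mathrm{Var}(\hat f_{2,k})\gtrsim\frac{6}{(2\pi)^2\gamma_k M^2}$, with $\gamma_k=|\alpha_k|^2A^2T/N_0$ emerging naturally as the post‑\gls{MF} \gls{SNR} (the $(N^2-1)\to N^2$ replacement being the place where "$M,N$ sufficiently large" is used). Finally, from \eqref{eqn:fd_fv}--\eqref{eqn:freq_2d} the map $(d_k,v_k)\mapsto(f_{1,k},f_{2,k})$ is linear with Jacobian $\mathbf{J}_k$ whose inverse (using $\kappa=W/T_0$, $f_sT_0=N$, $T=MT_0$) has $\partial d_k/\partial f_{1,k}=cN/(2W)$, $\partial d_k/\partial f_{2,k}=-c/(2W)$, $\partial v_k/\partial f_{1,k}=0$, $\partial v_k/\partial f_{2,k}=c/(2f_0T_0)$; then $\mathbf{B}_{\theta,k}=\mathbf{J}_k^{-1}\,\mathrm{diag}(\mathrm{Var}\,\hat f_{1,k},\mathrm{Var}\,\hat f_{2,k})\,\mathbf{J}_k^{-T}$ gives a $(1,1)$ entry $\frac{3c^2}{8\pi^2\gamma_k W^2}(1+O(1/M^2))$, a $(2,2)$ entry $\frac{3c^2}{8\pi^2\gamma_k f_0^2T^2}$, and an off‑diagonal entry that is $O(1/M)$ relative to the geometric mean of the diagonal; discarding the vanishing corrections yields \eqref{eqn:crlb_theta_k}.

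I expect the genuine work to sit in two places. The first is making the inter‑target decoupling quantitative: one needs a clean bound on the weighted Dirichlet sums $\sum_n n^r e^{j2\pi\delta n}$ that is uniform in the (fixed, nonzero) frequency separations, together with an explicit statement of what "sufficiently large $M,N$" must mean relative to the target spacing — this is the only truly asymptotic ingredient, and everything else is exact. The second is bookkeeping discipline: correctly Schur‑complementing out $\psi_k$ (and verifying $b_k$ really does decouple), so that the frequency sub‑block is diagonal, and then not dropping constant factors when composing with $\mathbf{J}_k^{-1}$; beyond that the computation is elementary power‑sum algebra.
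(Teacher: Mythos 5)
Your proposal is correct and follows essentially the same route as the paper's Appendix A: form the $4K\times4K$ \gls{FIM} for $\boldsymbol{\xi}$, show the cross-target blocks are asymptotically negligible (the paper does this via the closed-form Dirichlet-kernel expression for $X^{(l_0,l_1,l_2)}_{k_1k_2}$ rather than a summation-by-parts bound), invert the per-target block, and push the frequency \gls{CRLB} through the linear map to $(d_k,v_k)$ before discarding the $O(1/M^2)$ corrections. The only cosmetic difference is that you Schur-complement out $b_k$ and $\psi_k$ to get an exactly diagonal $2\times2$ frequency block, whereas the paper inverts the full normalized $4\times4$ limit matrix $\mathbf{C}$; your constants and the resulting $\mathbf{B}^{(0)}_{\theta,k}$ agree with the paper's.
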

	}
	
	\begin{proof}
		See Appendix \ref{appendix:CRLB}.	
	\end{proof}
	
	

	We use $B_{d,k}$ and $B_{v,k}$ to denote the \gls{CRLB}s, i.e.,
	\begin{equation} \label{eqn:crlb_dv}
	B_{d,k}=\frac{3c^2}{8\pi^2 \gamma_k}\frac{1}{W^2},B_{v,k}=\frac{3c^2}{8\pi^2 \gamma_k}\frac{1}{f_0^2 T^2}.
	\end{equation}

	When designing $W$ and $T$, we assume that the \gls{SNR}  $\gamma_k$ is a constant to fix the mean power of radar.
	The results in \eqref{eqn:crlb_dv} illustrate that when $\gamma_k$ is fixed, the \gls{CRLB}s of $d_k$ and $v_k$ are inversely proportional to $W^2$ and $T^2$, respectively. This implies that the accuracy of range and velocity measurement requires resources in the frequency time domain, respectively. 
	
	
	Moreover, although multiple scatterers are observed, we can consider that only one of them is critical for the system to decide whether to give a warning or not. 
	We select the critical scatterer with the smallest positive \gls{TTC}, since the \gls{CWS} depends on such a scatterer to decide whether to warn or not, as shown in Section \ref{sec:performance}. In the remainder, we assume that this scatterer is already selected and the subscript $k$ is omitted for conciseness, e.g., $B_d = B_{d,k}$.
	
	In conventional design of radar waveform, the bandwidth and duration are designed according to requirement on resolution of range and velocity, respectively. Their designs are separate.  
	Nevertheless, when the waveform TBP is limited to restrict resource occupation, \eqref{eqn:crlb_dv} implies that there is a tradeoff of accuracy between range and velocity. Our main purpose is to design the bandwidth and duration jointly to reach a tradeoff where the \gls{CWS} achieves its best performance.

	\vspace{-0.3cm}
	\subsection{Distribution of estimation errors}
	\label{subsec:distribution}
	We are now ready to derive the distribution of $\bm{\varepsilon}$, as required for analysis on \gls{CWS} performance. \Copy{distribution_theorem}{As stated in \cite[Theorem 7.3]{kay1993fundamentals}, when the \gls{SNR} is sufficiently high, the estimation error using \gls{MLE} is asymptotically distributed (for large $M$ and $N$) as a zero-mean Gaussian variable whose covariance is the CLRB}. \Copy{expression_sigma}{Thus, we assume that $\bm{\varepsilon}$ follows zero-mean Gaussian distribution with 
	covariance $\bm{\Sigma}_{\bm{\varepsilon}} = \diag(\sigma^2_{d},\sigma^2_{v})$,
	where
	\begin{equation} \label{eqn:sigma_dv}
	\sigma^2_{d} = B_d = \frac{3c^2}{8\pi^2 \gamma}\frac{1}{W^2},\sigma^2_{v} = B_v =\frac{3c^2}{8\pi^2 \gamma}\frac{1}{f_0^2 T^2}.
	\end{equation}}
	\Copy{distribution_test}{\Copy{error_MLE}{To test this assumption, we generate observations according to \eqref{eqn:y_nm} and use \gls{MLE} to get samples of estimates. Two results are drawn from these samples: First, when the \gls{SNR} is greater than 10 dB, which is a common case in practice \cite{forstner200877ghz}, the covariance of range and velocity errors is close to their \gls{CRLB}}; Second, we conduct the Kolmogorov-Smirnov test \cite{massey1951kolmogorov} on samples of $\varepsilon_d$ and $\varepsilon_v$, and the results do not reject the null hypothesis that the samples are drawn from the assumed Gaussian distribution at the $5\%$ significance level, which verifies that the error follows a Gaussian distribution.} The above results construct a preliminary verification about our assumption. It will be further checked through simulations in Section \ref{sec:results}.

	\section{Performance of collision warning system}
	\label{sec:performance}
	\Copy{TWDL_motivation}{After the range and velocity are estimated, the \gls{CWS} decides whether the current situation is safe or threatening using a certain rule and takes corresponding actions. In this step, estimation errors of radar may lead to wrong decisions, which can be divided into two types: The first is triggering an alarm in a safe situation, i.e., deciding $\mathcal{H}_0$ as $\mathcal{H}_1$, called a false alarm; The second regards a threatening case as safe, i.e.,  deciding $\mathcal{H}_1$ as $\mathcal{H}_0$, called a miss \cite{kay1993fundamentals}. In a \gls{CWS}, we hope to reduce both false alarm and miss. However, when the radar performance is fixed, there is a tradeoff between these two types of wrong decisions \cite{lee2005evaluation}. Therefore, we propose \gls{TWDL} that balances wrong decisions under all possible situations and further propose \gls{MTWDL} as the performance metric of CWS based on \gls{TWDL}.} Then, we establish the relationship between the CWS performance metric and radar performance. 
	
	The remainder is arranged as follows: First, we set a criterion to partition $\mathcal{H}_0$ and $\mathcal{H}_1$ as well as test statistics 
	in Section \ref{subsec:hypo_rule}. Second, we introduce \gls{TWDL} as the performance metric of decision rule in Section \ref{subsec:metric}. Third, we further introduce \gls{MTWDL} as the performance metric of CWS and analyze its relationship with radar performance in Section \ref{subsec:property_twdl}.
	
	\vspace{-0.3cm}
	\subsection{Two hypotheses and test statistics}
	\label{subsec:hypo_rule}
	First, we need a criterion to partition safe and threatening situations (also called $\mathcal{H}_0$ and $\mathcal{H}_1$ hypothesis) based on the ground truth $\bm{\theta}$ \cite{lee2005evaluation,lee2016real}. Later, we will extend to the case when the ground truth is unknown. 
	In this respect, the \gls{TTC} is proven effective for partition \cite{lee2005evaluation,minderhoud2001extended} and it is widely used in real systems \cite{jermakian2017effects,phillips2019real}. The \gls{TTC} is defined as $\tau = -{d}/{v}$. 
	%
	Based on the \gls{TTC}, hypotheses $\mathcal{H}_1$ and $\mathcal{H}_0$ are given by \cite{lee2005evaluation}
	\begin{equation} \label{eqn:cw}
	\begin{aligned}
	\mathcal{H}_1 &: 0<\tau<\tau_0,\\
	\mathcal{H}_0 &: \mbox{else}.
	\end{aligned}
	\end{equation}
	That is, a threatening situation occurs when the vehicle is approaching some object 
	and the \gls{TTC} is less than a threshold $\tau_0$. Typical values of $\tau_0$ are from three to five seconds in application \cite{minderhoud2001extended}. 
	Additionally, an equivalent form of (\ref{eqn:cw}) is
	\begin{equation} \label{eqn:hyp_final}
	\begin{aligned}
	\mathcal{H}_1 &: d+\tau_0 v<0,\\
	\mathcal{H}_0 &: d+\tau_0 v\geq 0.
	\end{aligned}
	\end{equation}
	
	\Copy{glrt}{In practice, the real value of $\bm{\theta}$ cannot be known exactly due to the noise. The \gls{CWS} makes a decision with the estimate $\hat{\bm{\theta}} = [\hat{d}, \hat{v}]^T$, which contains error $\bm{\varepsilon}$. 
		One common method for decision with unknown parameters is the \gls{GLRT}, which usually performs well \cite{kay1998fundamentals}.} 
	\Copy{GLRT_result}{According to Appendix \ref{appendix:glrt}, the \gls{GLRT} rule for the \gls{CWS} is
	\begin{equation} \label{eqn:glrt_rule}
	T_G(\hat{\bm{\theta}}) \mathop  \gtrless \limits_{\mathcal{H}_1}^{\mathcal{H}_0}
	\lambda,
	\end{equation}
	where the test statistic is
	\begin{equation} \label{eqn:T_G}
	\begin{aligned}
	T_G(\hat{\bm{\theta}})\! =\! \left\{
	\begin{aligned}
	\hat{d}+\tau_0 \hat{v}, &\  \hat{d}>0,\hat{d}\geq \frac{\sigma_{d}^2}{\sigma_{v}^2 \tau_0}\hat{v}, \\
	\sqrt{\sigma_{d}^2\!+\!\tau_0^2 \sigma_{v}^2}\sqrt{\frac{\hat{d}^2}{\sigma_{d}^2}\!+\!\frac{\hat{v}^2}{\sigma_{v}^2}}, & \  \hat{d}>0,\hat{d}< \frac{\sigma_{d}^2}{\sigma_{v}^2 \tau_0}\hat{v}.
	\end{aligned}
	\right.
	\end{aligned}
	\end{equation}
}
	
	According to (\ref{eqn:T_G}), there are two expressions of the test statistic $T_G(\hat{\bm{\theta}})$ depending on the region that $\hat{\bm{\theta}}$ lies in. For the simplicity of analysis, we also consider a test statistic using the first expression 
	appearing in $T_G(\hat{\bm{\theta}})$, given by
	\begin{equation}
	T_A(\hat{\bm{\theta}})=\hat{d}+\tau_0 \hat{v}.
	\end{equation}
	Here, we make an explanation on why we construct the new statistic with the first expression in \eqref{eqn:T_G}. According to \eqref{eqn:hyp_final}, $d+\tau_0 v=0$ is the boundary between safe and threatening situations. An intuition is that a situation with estimation $\hat{\bm{\theta}} = [\hat{d},\hat{v}]^T$ lying near the boundary has high potential of making a wrong decision and such situations are critical for the performance of \gls{CWS}. Most of these critical points $\hat{\bm{\theta}}$ lie in the first region shown in \eqref{eqn:T_G} and the approximate test statistic $T_A(\hat{\bm{\theta}})$ equals $T_G(\hat{\bm{\theta}})$ for these points. Moreover, for the critical points that lie in the second region shown in \eqref{eqn:T_G}, the difference between $T_A(\hat{\bm{\theta}})$ and $T_G(\hat{\bm{\theta}})$ is small. Therefore, $T_A(\hat{\bm{\theta}})$ can be a good approximate for $T_G(\hat{\bm{\theta}})$. This will be further verified through numerical results in Section \ref{sec:results}. 
	
	We use $T_A(\hat{\bm{\theta}})$ to make decisions with the following rule:
	\begin{equation} \label{eqn:app_rule}
	T_A(\hat{\bm{\theta}}) \mathop  \gtrless \limits_{\mathcal{H}_1}^{\mathcal{H}_0}
	\lambda,
	\end{equation}
	In the remainder, we call (\ref{eqn:glrt_rule}) the \gls{GLRT} rule and call (\ref{eqn:app_rule}) the approximate rule. \Copy{rule_threshold}{Both rules depend on a threshold $\lambda$. It is a tunable variable and selected to optimize the decision rule. For this purpose, we introduce a performance metric for decision rules in Section \ref{subsec:metric}, which enables us to find the optimal $\lambda$.}
	
	\vspace{-0.3cm}
	\subsection{Performance metric of decision rule}
	\label{subsec:metric}
	
	Since the ground truth $\bm{\theta}$ is unknown to the \gls{CWS}, wrong decisions happen using a decision rule. 
	In order to evaluate the performance of a decision rule, we introduce a metric called \gls{TWDL}, which we describe in the following.
	
	\Copy{TWDL}{\Copy{pw_distribution}{For a given pair of true values $d$ and $v$, we use $P_\mathrm{w}(d,v)$ to denote the probability of making a wrong decision, which depends on two hypotheses \eqref{eqn:hyp_final}, decision rule and error distribution.} Moreover, we use \gls{PWDL}, denoted by $u(d, v)$, to quantify the loss of a wrong decision. It is manually designed to reflect the severity of a wrong decision under different situations. The \gls{PWDL} is a function of $d$ and $v$, and is irrelevant of radar performance and decision rule, meaning that $u(d,v)$ is irrelevant of the decision threshold $\lambda$ and the estimation error variances $\sigma^2_d$ and $\sigma^2_v$. We integrate the product of $P_\mathrm{w}(d,v)$ and $u(d, v)$ over a domain $D$, which contains all possible pairs of $d$ and $v$ in practice. The integral is the defined \gls{TWDL}, given by
	\begin{equation} \label{eqn:twdl}
	U = \iint_D  u(d,v) P_\mathrm{w}(d,v)   \mathrm{d}d \mathrm{d}v.
	\end{equation}	
	Through integration, the \gls{TWDL} synthesizes wrong decisions under all situations within $D$ and balances them according to respective probability and loss, i.e., $P_\mathrm{w}(d,v)$ and $u(d,v)$. The \gls{TWDL} is used to quantify the performance of a decision rule.}
	
	Moreover, simulations in Section \ref{subsec:comp_rule} will show that the performance of the GLRT and approximate rules is pretty close. Our following theoretical analysis on the system performance is based on the approximate rule for simplicity.
	
	With the hypotheses \eqref{eqn:hyp_final} and approximate decision rule \eqref{eqn:app_rule}, we derive the probability $P_\mathrm{w}(d,v)$ analytically. 
	For instance, when $d+\tau_0 v$ is non-negative, representing a safe situation $\mathcal{H}_0$, $P_\mathrm{w}(d,v)$ is the probability that a warning is given according to the rule \eqref{eqn:app_rule} with estimate $\hat{\bm{\theta}} = [\hat{d}, \hat{v}]^T$. Its derivation is as follows: 
	\begin{equation} \label{eqn:pw_derive}
	\begin{aligned}
	P_\mathrm{w}(d,v) & = \mathrm{Pr}\left(\hat{d}+\tau_0 \hat{v}<\lambda; d,v\right)\\
	& = \mathrm{Pr}\left((d+\varepsilon_d)+\tau_0 (v+\varepsilon_v)<\lambda; d,v\right)\\
	& = \mathrm{Pr}\left(-\frac{\varepsilon_d+\tau_0 \varepsilon_v}{\sigma_Z} > \frac{d+\tau_0 v-\lambda}{\sigma_Z}; d,v\right)\\
	& = Q\left(({d+\tau_0 v-\lambda})/{\sigma_{Z}}\right),
	\end{aligned}
	\end{equation}
	where $\sigma^2_{Z}$ is the variance of $\varepsilon_d+\tau_0 \varepsilon_v$, which is a weighted sum of range and velocity error variances, given by
	\begin{equation} \label{eqn:error}
	\sigma^2_{Z} = \sigma_{d}^2 + \tau_0^2 \sigma_{v}^2.
	\end{equation}
	We define $\sigma^2_{Z}$ as range-velocity joint error index (called error index for short). 
	The univariate Q-function $Q(x)$ is given by
	\begin{equation} \label{eqn:qfunc}
	Q(x) = \frac{1}{\sqrt{2\pi}} 
	\int_x^{\infty} 
	\exp\left(-\frac{t^2}{2}\right)
	\mathrm{d}t.
	\end{equation}
	
	Similarly, we can obtain the probability $P_\mathrm{w}(d,v)$ for negative $d+\tau_0 v$. The complete expression of $P_\mathrm{w}(d,v)$ is
	\begin{equation} \label{eqn:pw}
	P_\mathrm{w}(d,v)=\left\{
	\begin{aligned}
	Q\left(({d+\tau_0 v-\lambda})/{\sigma_{Z}}\right), & \quad d+\tau_0 v\geq 0, \\
	Q\left(-({d+\tau_0 v-\lambda})/{\sigma_{Z}}\right), & \quad d+\tau_0 v < 0.
	\end{aligned}
	\right.
	\end{equation}
	
	The result (\ref{eqn:pw}) indicates that $P_\mathrm{w}(d,v)$ depends on the decision threshold $\lambda$ and square root of error index  $\sigma_{Z}$. Therefore, when the  integral domain $D$ and manually designed \gls{PWDL} $u(d,v)$ are fixed, the \gls{TWDL} $U$ shown in (\ref{eqn:twdl}) also depends on the variables $\lambda$ and $\sigma_{Z}$. We use the notation $U(\lambda,\sigma_{Z})$ to show this dependence. 
	
	Based on the loss function $U(\lambda,\sigma_{Z})$, we will tune the parameter $\lambda$ and reveal how the radar performance influence the performance of CWS in the sequel. 
	
	\vspace{-0.3cm}
	\subsection{Relationship between CWS and radar performance}
	\label{subsec:property_twdl}
	\Copy{MTWDL}{As shown in Section \ref{subsec:hypo_rule}, the threshold $\lambda$ is tunable. For a specific $\sigma_{Z}$, we tune it to $\tilde{\lambda}(\sigma_{Z})$ to minimize \gls{TWDL}, i.e.,
	\begin{equation} \label{eqn:opt_lambda}
	\tilde{\lambda}(\sigma_{Z})= \mathop{\arg\min}_{\lambda} \  U(\lambda,\sigma_{Z}).
	\end{equation}
	The optimal threshold $\tilde{\lambda}(\sigma_{Z})$ yields an optimal decision rule in terms of TWLD. 
	We define the corresponding minimal TWDL as \gls{MTWDL}, denoted by $\tilde{U}$. That is,
	\begin{equation} \label{eqn:opt_loss}
	\tilde{U}(\sigma_{Z}) = U(\tilde{\lambda}(\sigma_{Z}),\sigma_{Z}).
	\end{equation}}
	As with \gls{TWDL}, \gls{MTWDL} also balances wrong decisions within overall situations. It is obtained by tuning the decision threshold to its optimum. Therefore, we use \gls{MTWDL} as the performance metric of the CWS.
	Note that $\tilde{U}$ is a function of $\sigma_{Z}$, while $\sigma_{Z}$ is decided by the radar performance, as shown in (\ref{eqn:error}). Therefore, the function $\tilde{U}(\sigma_{Z})$ establishes the desired relationship between performances of the system and radar. 
	
	Recall that the error index $\sigma^2_{Z}$ is the variance of $\hat{d}+\tau_0 \hat{v}$, i.e., the estimate of ${d}+\tau_0 {v}$. As shown in \eqref{eqn:hyp_final} and \eqref{eqn:app_rule}, the two hypotheses are defined based on ${d}+\tau_0 {v}$ and the system makes decision based on its estimation. One intuition is that a more accurate estimate of ${d}+\tau_0 {v}$, in terms of smaller $\sigma^2_{Z}$, leads to a system with better performance. This intuition is strictly verified by the following theorem: 
	\begin{theorem} \label{theo:mono}
		The \gls{MTWDL} $\tilde{U}(\sigma_{Z})$ is a monotonically increasing function of $\sigma_{Z}$ if the following conditions are satisfied: 
		\begin{enumerate}
			\item In the domain $D$, $u(d,v)>0$ for all $(d,v)$.
			\item The function $u(d,v)$ is Lebesgue integrable over $D$.		
			\item The domain $D$ is bounded.
		\end{enumerate} 
	\end{theorem}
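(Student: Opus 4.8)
The plan is to reduce the claim to a one-dimensional calculus statement about the value function $\tilde U$ and to drive it with a single algebraic identity between the two partial derivatives of $P_\mathrm{w}$. First I would note from \eqref{eqn:pw} that $P_\mathrm{w}$ depends on $(d,v)$ only through the scalar $t:=d+\tau_0 v$, equalling $Q\!\bigl((t-\lambda)/\sigma_Z\bigr)$ for $t\ge0$ and $Q\!\bigl((\lambda-t)/\sigma_Z\bigr)$ for $t<0$. Boundedness of $D$ (condition 3) makes $t$ range over a bounded interval, so $\partial P_\mathrm{w}/\partial\lambda$ and $\partial P_\mathrm{w}/\partial\sigma_Z$ are bounded; combined with Lebesgue integrability of $u$ (condition 2), this justifies differentiating $U(\lambda,\sigma_Z)$ in \eqref{eqn:twdl} under the integral sign in either argument. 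A Gaussian-tail comparison of the $t\ge0$ and $t<0$ contributions shows $\partial U/\partial\lambda<0$ as $\lambda\to-\infty$ and $>0$ as $\lambda\to+\infty$, so (for non-degenerate $D$) the minimizer $\tilde\lambda(\sigma_Z)$ in \eqref{eqn:opt_lambda} is attained at a finite value and satisfies the first-order condition $\partial U/\partial\lambda|_{\lambda=\tilde\lambda}=0$.

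\textbf{The key identity.} Differentiating the two branches of \eqref{eqn:pw} and using $Q'(x)=-\tfrac{1}{\sqrt{2\pi}}e^{-x^2/2}$ gives, for every $t\ne0$,
\[
\frac{\partial P_\mathrm{w}}{\partial\sigma_Z}=\frac{t-\lambda}{\sigma_Z}\,\frac{\partial P_\mathrm{w}}{\partial\lambda},\qquad \frac{\partial P_\mathrm{w}}{\partial\lambda}=\mathrm{sgn}(t)\,\frac{1}{\sqrt{2\pi}\,\sigma_Z}\,e^{-(t-\lambda)^2/(2\sigma_Z^2)}.
\]
The chain rule attaches to $\partial_{\sigma_Z}$ exactly the extra factor $\pm(t-\lambda)/\sigma_Z$ relative to $\partial_\lambda$, and one checks the sign works out the same in both branches. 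In particular $t\,\partial P_\mathrm{w}/\partial\lambda\ge0$ everywhere, with strict inequality off the line $\{t=0\}$.

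\textbf{Envelope step and conclusion.} Since $|\partial U/\partial\sigma_Z|\le C/\sigma_Z$ uniformly in $\lambda$, the family $\{U(\lambda,\cdot)\}_\lambda$ is equi-Lipschitz in $\sigma_Z$ on compact intervals, so $\tilde U=\min_\lambda U(\lambda,\cdot)$ is locally Lipschitz, hence absolutely continuous; at every $\sigma_Z$ where it is differentiable the envelope theorem gives $\mathrm{d}\tilde U/\mathrm{d}\sigma_Z=\partial U/\partial\sigma_Z|_{\lambda=\tilde\lambda(\sigma_Z)}$. Substituting the identity above and then subtracting $\tilde\lambda(\sigma_Z)\,\partial U/\partial\lambda|_{\lambda=\tilde\lambda}=0$ yields
\[
\frac{\mathrm{d}\tilde U}{\mathrm{d}\sigma_Z}=\frac{1}{\sigma_Z}\iint_D u(d,v)\,(t-\tilde\lambda)\,\frac{\partial P_\mathrm{w}}{\partial\lambda}\,\mathrm{d}d\,\mathrm{d}v=\frac{1}{\sigma_Z}\iint_D u(d,v)\,t\,\frac{\partial P_\mathrm{w}}{\partial\lambda}\,\mathrm{d}d\,\mathrm{d}v.
\]
By the sign property the integrand is nonnegative, and since $u>0$ (condition 1) it is strictly positive except on the measure-zero set $\{t=0\}$; hence $\mathrm{d}\tilde U/\mathrm{d}\sigma_Z>0$ wherever it exists, and integrating over any $[\sigma_Z^{(1)},\sigma_Z^{(2)}]$ via absolute continuity gives $\tilde U(\sigma_Z^{(1)})<\tilde U(\sigma_Z^{(2)})$, i.e.\ strict monotonic increase.

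\textbf{Main obstacle.} The algebra above is short; the real work lies in the ``soft'' analysis --- verifying that the minimum defining $\tilde\lambda(\sigma_Z)$ is genuinely attained (this is where boundedness of $D$ enters, through the tail estimate), and rigorously justifying the interchange of the minimization over $\lambda$ with the $\sigma_Z$-differentiation. I expect the cleanest route to the latter is the equi-Lipschitz/absolute-continuity argument indicated above rather than assuming $\tilde\lambda(\cdot)$ is differentiable, and one should also record the dominated-convergence bounds that license differentiating under the integral in the first place.
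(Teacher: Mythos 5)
Your proof is correct and follows essentially the same route as the paper's: both use the first-order condition at $\tilde\lambda$ to reduce $\mathrm{d}\tilde U/\mathrm{d}\sigma_Z$ to the same strictly positive integral $\frac{1}{\sqrt{2\pi}\,\sigma_Z^2}\iint_D u(d,v)\,\abs{Z}\,e^{-(Z-\tilde\lambda)^2/(2\sigma_Z^2)}\,\mathrm{d}d\,\mathrm{d}v$, and your pointwise identity $\partial_{\sigma_Z}P_\mathrm{w}=\frac{t-\lambda}{\sigma_Z}\,\partial_{\lambda}P_\mathrm{w}$ is exactly the integrand-level form of the paper's relation $\partial U/\partial\sigma_Z=\sigma_Z\,\partial^2 U/\partial\lambda^2$. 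The only divergence is in the supporting analysis --- the paper establishes that every critical point is a local minimum and hence that the extremum is unique, while you instead justify attainment by Gaussian-tail estimates and the envelope step by an equi-Lipschitz/absolute-continuity argument (also noting the non-degeneracy of $D$ needed for attainment, which the paper leaves implicit) --- and both are sound.
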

	\begin{proof}
		See Appendix \ref{appendix:mono_loss}.
	\end{proof}
	
	In Theorem \ref{theo:mono}, we impose mild conditions on the manually designed function $u(d,v)$, so it holds for a wild range of $u(d,v)$. It is implied that a smaller $\sigma^2_{Z}$  yields better \gls{CWS} performance in terms of \gls{MTWDL}. 
	Moreover, \eqref{eqn:sigma_dv} indicates that when  waveform \gls{TBP} is limited, there is a tradeoff of accuracies between range and velocity.
	Therefore, we should design the radar waveform parameters in the aim of minimizing $\sigma^2_{Z}$, such that the tradeoff that optimizes system performance is achieved. The quantitative relationship between the radar and system performance revealed in this theorem inspires the waveform design method presented in the following section.

	\section{Optimized waveform parameter design}
	\label{sec:design}
	According to Theorem \ref{theo:mono}, in order to achieve the best system performance, we optimize the waveform parameters $W$ and $T$ by minimizing the error index. For this purpose, we formulate the parameter optimization problem and analyze its results in Section \ref{subsec:opt_form}; We compare the optimized parameters with conventional designs for automotive radars in Section \ref{subsec:comp}.
	
	\vspace{-0.3cm}
	\subsection{Formulation of parameter optimization}
	\label{subsec:opt_form} 
	First, we substitute (\ref{eqn:crlb_dv}) into (\ref{eqn:error}). Hence, the error index depends on the bandwidth $W$ and duration $T$ as follows: 
	\begin{equation} \label{eqn:sigma_waveform}
	\sigma^2_{Z} = 
	\sigma_{d}^2 + \tau_0^2 \sigma_{v}^2 =\frac{3c^2}{8\pi^2 \gamma}\left(
	\frac{1}{W^2}+\frac{\tau_0^2}{f_0^2}\frac{1}{T^2}
	\right).
	\end{equation}
	Therefore, the parameter design problem can be given by
	\begin{equation} \label{eqn:optimization}
	\begin{aligned}
	&\min_{W,T} &&\sigma^2_{Z}  = \frac{3c^2}{8\pi^2 \gamma}\left(
	\frac{1}{W^2}+\frac{\tau_0^2}{f_0^2}\frac{1}{T^2}
	\right),\\
	&\mbox{s.t.}&&(\mbox{C1})\quad 0 < W \leq W_{\mathrm{max}}, 0 < T \leq T_{\mathrm{max}},\\
	& &&(\mbox{C2})\quad WT \leq S.
	\end{aligned}
	\end{equation}
	Here, we impose the following constraints on the optimization:
	\begin{enumerate}
		\item[(C1)]  The variables $W$ and $T$ are bounded. The upper bounds are interpreted as follows: First, the bandwidth available to all automotive radars is $W_{\max}$, which upper bounds $W$. 
		Second, there is a requirement on the update rate of measurement
		, and the duration of the waveform cannot exceed the update period $T_{\max}$.
		\item[(C2)] The \gls{TBP} of the waveform $WT$ is no greater than a \gls{TBP} limit $S$, in order to limit resource occupation of the radar. In practice, $S$ is set according to the number of coexisting radars $N_r$, such that $N_r S\leq W_{\mathrm{max}} T_{\mathrm{max}}$.
	\end{enumerate}
	
	Moreover, in Section \ref{sec:model}, possible values of $T$ are $M T_0$, where $M$ is an integer. In the optimization problem \eqref{eqn:optimization}, however, we treat $T$ as a continuous variable since $T_0$ is far smaller than $T$ and the change of $T$ over $M$ is relatively small.
	
	To solve \eqref{eqn:optimization}, we first assume that  $W_{\mathrm{max}}$ and $T_{\mathrm{max}}$ are reasonably large, such that $W$ and $T$ take their optima when the constraint (C1) is inactive. 
	Using the inequality of arithmetic and geometric means, we have
	
	\begin{equation}\label{eqn:sigma_opt}
	\sigma^2_{Z} \geq \frac{3c^2}{8\pi^2 \gamma}\frac{2\tau_0}{f_0 WT}\geq \frac{3c^2\tau_0}{4\pi^2\gamma f_0 S}= \sigma^2_{Z,\mathrm{opt}},
	\end{equation}
	where $\sigma^2_{Z,\mathrm{opt}}$ denotes the optimal error index. The equality holds if and only if
	\begin{equation}\label{eqn:W_T_opt}
	W = W_\mathrm{opt}= \sqrt{{f_0 S}/{\tau_0}},  T=T_\mathrm{opt}= \sqrt{{\tau_0 S}/{f_0}}.
	\end{equation}
	The solution \eqref{eqn:W_T_opt} is optimal as long as $W_\mathrm{opt}\leq W_\mathrm{max}$ and $T_\mathrm{opt}\leq T_\mathrm{max}$.
	In the case that $W_\mathrm{opt}> W_\mathrm{max}$ or $T_\mathrm{opt} > T_\mathrm{max}$ (At most one of the two situations may happen since we require $S\leq W_{\mathrm{max}} T_{\mathrm{max}}$), the corresponding solution is
	\begin{equation}
	\begin{aligned}
	&W = W_{\mathrm{max}}, T = S/W_{\mathrm{max}}, &&\mbox{ if } W_\mathrm{opt}> W_\mathrm{max},\\
	&W = S/T_{\mathrm{max}}, T = T_{\mathrm{max}}, &&\mbox{ if } T_\mathrm{opt}> T_\mathrm{max}.
	\end{aligned}
	\end{equation}
	In practice, $W_\mathrm{opt}\leq W_\mathrm{max}, T_\mathrm{opt}\leq T_\mathrm{max}$ is the general case and will be analyzed in the remainder.
	
	The following two observations are made about the optimization results in \eqref{eqn:sigma_opt} and \eqref{eqn:W_T_opt}: First, when the other parameters are fixed, the optimal error index $\sigma^2_{Z,\mathrm{opt}}$ is inversely proportional to the \gls{SNR} $\gamma$ and \gls{TBP} $S$. Second, the ratio of optimal bandwidth and duration is $W_{\mathrm{opt}}/T_{\mathrm{opt}}=f_0/\tau_0$, which is invariant of $\gamma$ and $S$. This ratio reflects the tradeoff between bandwidth and duration when the \gls{TBP} is limited. 
	
	\vspace{-0.3cm}
	\subsection{Comparison between optimized and conventional design}
	\label{subsec:comp}
	In this section, we compare the  waveform optimization strategy in Section~\ref{sec:design} with some existing approaches in the terms of CWS performance and  resource efficiency. 
	
	\Copy{conventional_design}{We consider some state-of-the-art waveform designs for automotive radars \cite{russell1997millimeter, hasch2012millimeter,sit2018bpsk,yi201924}.  
		The routine of parameter design is as follows: First, requirements on range resolution $\Delta_d$ and velocity resolution $\Delta_v$ are proposed empirically. Then, the designed bandwidth $W_{\mathrm{con}}$ and duration $T_{\mathrm{con}}$ are given by
		\begin{equation} \label{eqn:W_T_con}
		W_{\mathrm{con}} = {c}/({2\Delta_d}),  T_{\mathrm{con}} = {c}/({2 f_0 \Delta_v}).
		\end{equation}
	}
	The corresponding waveform \gls{TBP} is $S_{\mathrm{con}} = {c^2}/({4f_0 \Delta_d \Delta_v})$.
	To facilitate the comparison with our method, we calculate the error index introduced for the convensional approach, given by 
	\begin{equation} \label{eqn:error_con}
	\sigma^2_{Z,\mathrm{con}} =\frac{3c^2\tau_0}{8\pi^2\gamma f_0 S_{\mathrm{con}}} \left(\frac{\Delta_d}{\tau_0\Delta_v}+\frac{\tau_0\Delta_v}{\Delta_d}\right).
	\end{equation}
	In these methods, the bandwidth and duration of the waveform are designed separately. The issue of limited electromagnetic resources and the tradeoff of estimation accuracy between range and velocity are barely considered. 

	We now compare between the proposed and the conventional waveform design methods. In particular, we consider the parameters of error index, \gls{TBP} and \gls{SNR}, representing the CWS performance, electromagnetic resource efficiency and energy efficiency, respectively. We analyze the influence on one parameter of the design approaches when the rest parameters are fixed. 
	Details of comparisons are as follows:
	\subsubsection{Comparison on error index with \gls{SNR} and \gls{TBP} fixed}
	When the \gls{SNR} $\gamma$ is fixed and \gls{TBP} is set as $S_{\mathrm{con}}$, the results of optimized design are obtained using \eqref{eqn:W_T_opt} and \eqref{eqn:sigma_opt}. Further, we compare the error indices of two designs as follows:
	\begin{equation}\label{eqn:ratio_error}
	\frac{\sigma^2_{Z,\mathrm{opt}}}{\sigma^2_{Z,\mathrm{con}}} = \frac{2\tau_0\cdot \Delta_d/\Delta_v}{\tau_0^2 + (\Delta_d/\Delta_v)^2}.
	\end{equation}
	A deduction from \eqref{eqn:ratio_error} is that $\sigma^2_{Z,\mathrm{opt}}<\sigma^2_{Z,\mathrm{con}}$ as long as $\tau_0 \neq \Delta_d/\Delta_v$. That is, the optimized design reduces the error index and further improves the \gls{CWS} performance.
	To be specific, $\tau_0$ is the \gls{TTC} threshold of \gls{CWS}, whose typical values are 3 to 5 seconds (see \cite{minderhoud2001extended} and the references therein). The expression $\Delta_d/\Delta_v$ is the ratio of conventional range and velocity resolution, whose typical values are 0.7 to 1.2 seconds \cite{russell1997millimeter, hasch2012millimeter,sit2018bpsk,yi201924}. Thus,  $\tau_0$ is greater than $\Delta_d/\Delta_v$ in practice.
	\subsubsection{Comparison on \gls{TBP} with \gls{SNR} and error index fixed}
	In conventional design, after  $\Delta_d$ and $\Delta_v$ are specified, the bandwidth and duration are designed. Then, $\sigma_{Z,\mathrm{opt}}^2$ is given by \eqref{eqn:error_con}, which decides the \gls{CWS} performance. Here, we study how much \gls{TBP} is required using the optimized design to achieve the same performance as conventional design. Using \eqref{eqn:sigma_opt}, we obtain the \gls{TBP} $S_{\mathrm{opt}}$ to yield an error index equal to $\sigma_{Z,\mathrm{opt}}^2$. The ratio of two \gls{TBP}s is
	\begin{equation}\label{eqn:ratio_TBP}
	\frac{S_{\mathrm{opt}}}{S_{\mathrm{con}}} = \frac{2\tau_0\cdot \Delta_d/\Delta_v}{\tau_0^2 + (\Delta_d/\Delta_v)^2}.
	\end{equation}
	The result of this ratio is the same as \eqref{eqn:ratio_error} and we have ${S_{\mathrm{opt}}}<{S_{\mathrm{con}}}$ as long as $\tau_0 \neq \Delta_d/\Delta_v$. It means that the optimized design can reduce waveform \gls{TBP} while the system performance remains the same.
	\subsubsection{Comparison on \gls{SNR} with \gls{TBP} and error index fixed}
	We study what level of \gls{SNR} is required using the optimized design to achieve the same performance as conventional design. For a specific error index $\sigma_{Z}^2$ and \gls{TBP} $S$, we obtain the required \gls{SNR} for optimized and conventional design using \eqref{eqn:sigma_opt} and \eqref{eqn:error_con}, denoted by $\gamma_{\mathrm{opt}}$ and $\gamma_{\mathrm{con}}$. Their ratio is 
	\begin{equation}\label{eqn:ratio_SNR}
	\frac{\gamma_{\mathrm{opt}}}{\gamma_{\mathrm{con}}} = \frac{2\tau_0\cdot \Delta_d/\Delta_v}{\tau_0^2 + (\Delta_d/\Delta_v)^2}.
	\end{equation}
	Again, the ratio is the same as \eqref{eqn:ratio_error}, and $\gamma_{\mathrm{opt}}<\gamma_{\mathrm{con}}$ as long as $\tau_0 \neq \Delta_d/\Delta_v$. That is, the optimized design can achieve the same system performance as conventional design with lower transmitted power.

	To summarize, the above comparisons illustrate the advantages of optimized parameter design from three aspects, namely, performance improvement, resource efficiency and energy efficiency. These advantages are further demonstrated through simulations in the Section \ref{sec:results}.  

	\section{Numerical results}
	\label{sec:results}
	In this section, we use numerical results to verify previous assumptions and to compare the optimized and conventional parameter design. This section is divided into the following five parts: First, we introduce settings of the simulation in Section \ref{subsec:setting}. Second, we compare the \gls{GLRT} rule and approximate rule in Section \ref{subsec:comp_rule}. Third, we compare the error indices of the two designs in Section \ref{subsec:comp_error}. Fourth, we study the performance metric \gls{MTWDL} as a function of \gls{TBP} in Section \ref{subsec:comp_TBP} to compare the resource efficiency of the two designs. Finally, the \gls{MTWDL} is demonstrated as a function of \gls{SNR} in Section \ref{subsec:comp_SNR} to compare the energy efficiency. 
	
	\vspace{-0.3cm}
	\subsection{Simulation settings}
	\label{subsec:setting}	
	The system and waveform parameters are specified as follows: 
	$f_0 = 24 \mbox{ GHz} $ and $\tau_0 = 4 \mbox{ s} $ (as suggested in \cite{minderhoud2001extended})
	. The maximum bandwidth and duration, i.e., $W_{\max}$ and $T_{\max}$, are set as 500 MHz and 50 ms. 
	\Copy{two_designs}{For conventional design, the range and velocity resolution are $\Delta_d = 0.5\ \mbox{m}$ and $\Delta_v=0.6\ \mbox{m}/\mbox{s}$, as suggested in \cite{hasch2012millimeter}. The corresponding waveform parameters are $W_{\mathrm{con}} = 300 \mbox{ MHz}$ and $T_{\mathrm{con}} = 10.4 \mbox{ ms}$, and the \gls{TBP} is $S_{\mathrm{con}} = 3.125\times 10^6$. For optimized design, the waveform bandwidth and duration are designed by solving \eqref{eqn:optimization} given the \gls{TBP}.} Specifically, when the \gls{TBP} is $S_{\mathrm{con}}$, the optimized parameters are $W_{\mathrm{opt}} = 137\mbox{ MHz}$ and $T_{\mathrm{opt}} = 22.8\mbox{ ms}$.
	
	As for the \gls{TWDL}, the domain of interest $D$ is 
	$D: 0.1\mbox{ m}\leq d \leq 100\mbox{ m}, -30\mbox{ m/s}\leq v \leq 30\mbox{ m/s}$.
	We consider the \gls{PWDL} in the following forms:
	\begin{enumerate}
		\item Constant loss: The loss of false alarm and miss are two different constants, 
		i.e., 
		\begin{equation} \label{eqn:loss1}
		u(d,v)=\left\{
		\begin{aligned}
		1, & \quad d+\tau_0 v\geq 0, \\
		u_1, & \quad d+\tau_0 v < 0,
		\end{aligned}
		\right.
		\end{equation}
		where $u_1$ reflects the severity of miss relative to false alarm. Specifically, we set $u_1$ as 5 and 10, corresponding to \gls{PWDL} 1 and \gls{PWDL} 2, respectively.
		
		\item \gls{TTC}-dependent loss: Since the \gls{TTC} reflects the urgency of warning \cite{moon2009design}, the loss function $u(d,v)$ can be designed based on the \gls{TTC}. One example is
		\begin{equation} \label{eqn:loss2}
		u(d,v)=\left\{
		\begin{aligned}
		1, & \quad d+\tau_0 v\geq 0, \\
		u_2 \cdot (-v/d), & \quad d+\tau_0 v < 0.
		\end{aligned}
		\right.
		\end{equation}
		We set $u_2$ as 5 s and 10 s, corresponding to \gls{PWDL} 3 and \gls{PWDL} 4, respectively.
	\end{enumerate}
	
	%
	
	\vspace{-0.3cm}
	\subsection{Comparison between the GLRT and approximate rule}
	\label{subsec:comp_rule}
	In Section \ref{sec:performance}, we state that the performance of the \gls{GLRT} and approximate rule is close and conduct theoretical analysis based on the approximate rule. 
	Here, we use simulations to verify this statement.
	
	When we fix the \gls{TBP} $S$ as $S_\mathrm{con}$ and \gls{SNR} $\gamma$ as 20 dB, we can obtain the parameters of conventional and optimized design, as shown in Section \ref{subsec:setting}, and the variances, $\sigma^2_d$ and $\sigma^2_v$, according to \eqref{eqn:sigma_dv}. Further, the probability $P_\mathrm{w}(d,v)$ can be obtained through numerical integration for the \gls{GLRT} rule, while $P_\mathrm{w}(d,v)$ is given by \eqref{eqn:pw} for the approximate rule. Finally, the \gls{TWDL} $U$ can be obtained through the integration in \eqref{eqn:twdl} with different decision thresholds $\lambda$.
	
	For two configurations of waveform parameters and four \gls{PWDL} functions, we obtain the \gls{TWDL} using both the \gls{GLRT} and approximate rule as a function of the threshold $\lambda$ and the results are shown in Fig.~\ref{fig:loss_glrt_vs_appr}. 
	Two types of markers are used to distinguish the \gls{GLRT} rule (circle marker) and approximate rule (cross marker, abbreviated as Appr in the legends). Additionally, for each \gls{PWDL}, the \gls{TWDL} is divided by the minimal \gls{TWDL} over all $\lambda$ using the \gls{GLRT} rule for normalization. The results show that the normalized losses of these two rules are almost identical, thereby verifying our statement. 
	
	\begin{figure}[t]
		\subfigure[Conventional design: $W_{\mathrm{con}} = 300 \mbox{ MHz}$,  $T_{\mathrm{con}} = 10.4 \mbox{ ms}$.]{
			\begin{minipage}[t]{1.0\linewidth}
				\centering
				\centerline{\includegraphics[width=8cm]{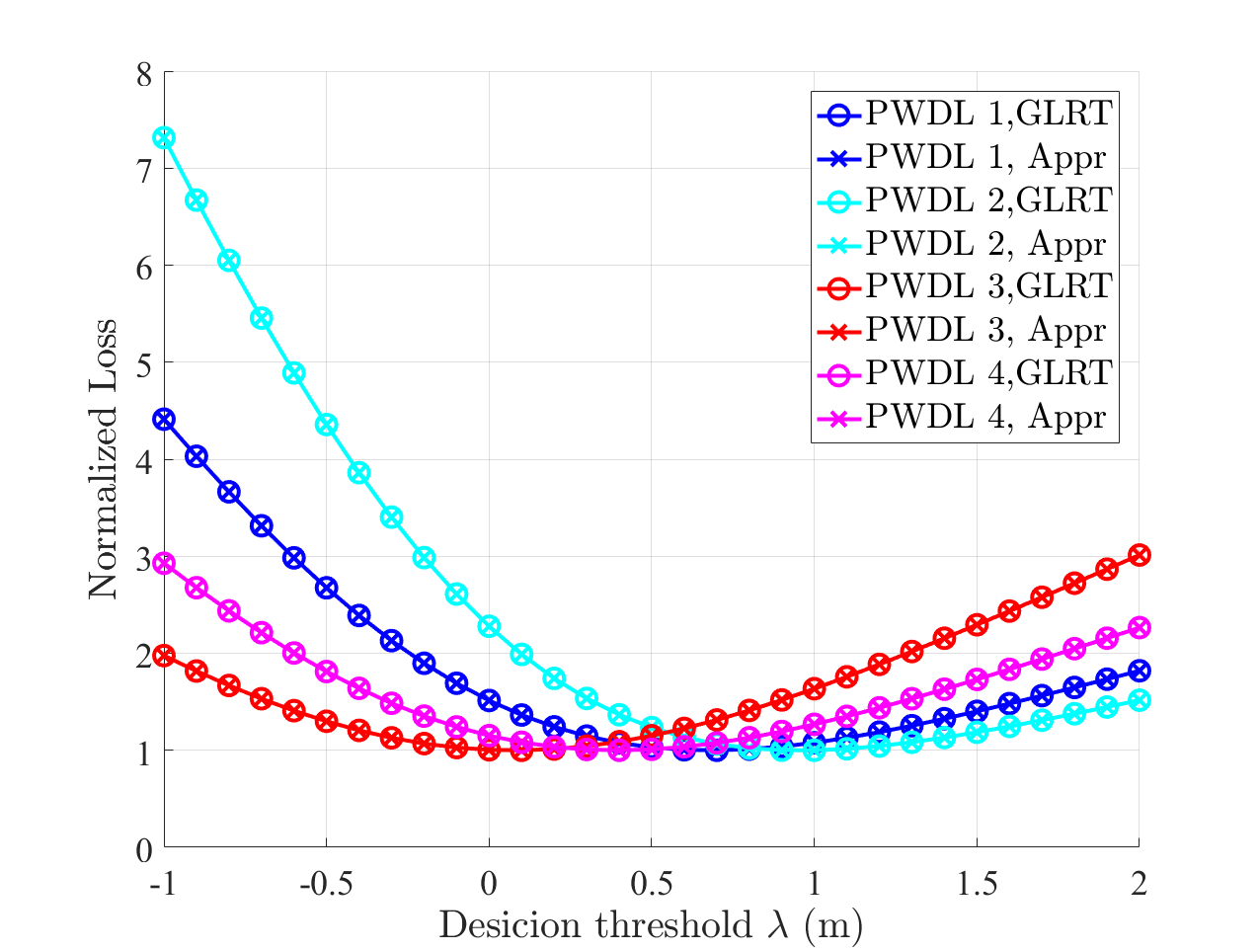}}
		\end{minipage}}
		\subfigure[Optimized design: $W_{\mathrm{opt}} = 137\mbox{ MHz}$, $T_{\mathrm{opt}} = 22.8\mbox{ ms}$.]{
			\begin{minipage}[t]{1.0\linewidth}
				\centering
				\centerline{\includegraphics[width=8cm]{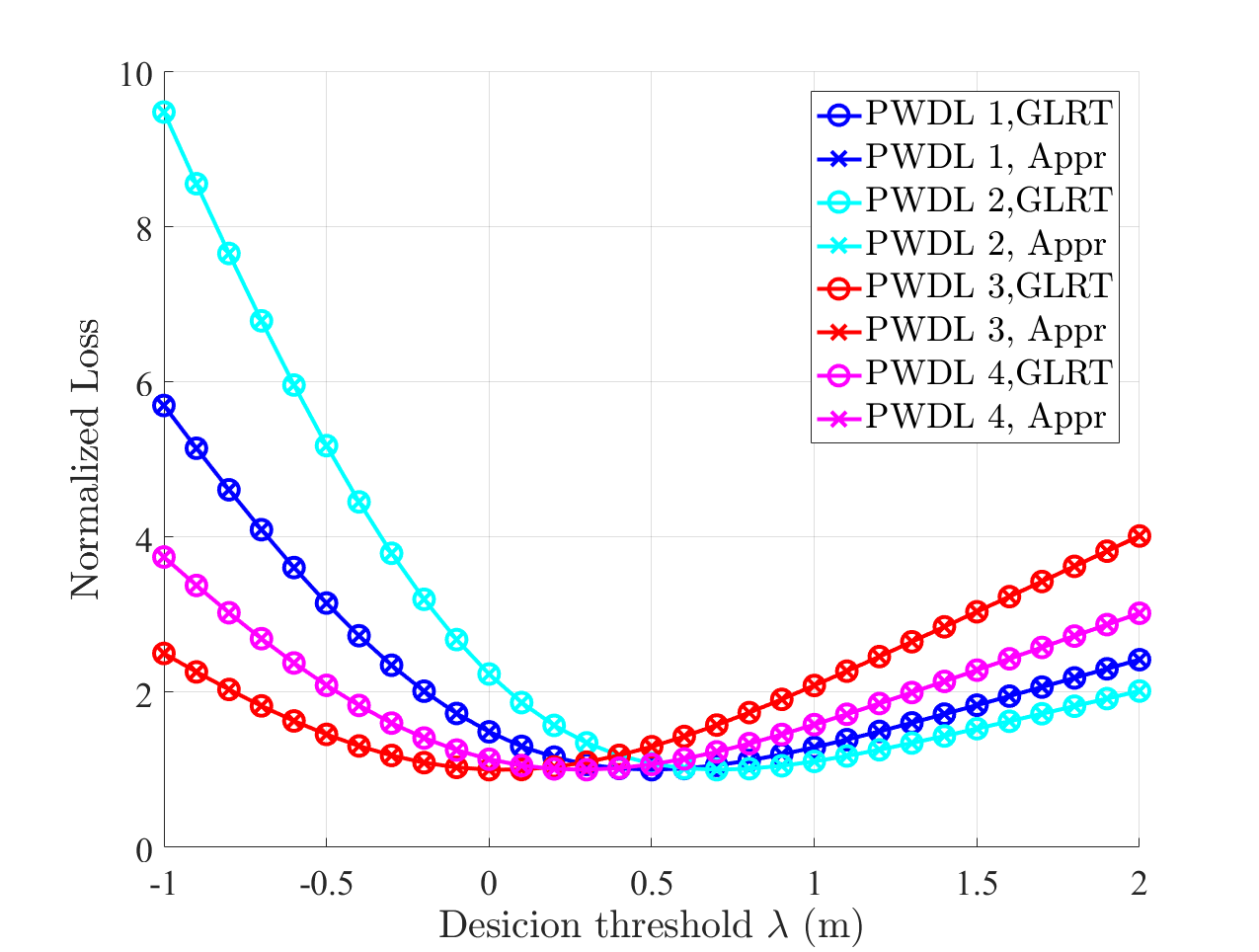}}
		\end{minipage}}
		\caption{Comparison of \gls{TWDL} using \gls{GLRT} and approximate rule with two waveform configurations and four \gls{PWDL} functions (\gls{SNR} $\gamma=20$ dB).}
		\label{fig:loss_glrt_vs_appr}
		\vspace{-0.3cm}
	\end{figure}
	
	\vspace{-0.3cm}
	\subsection{Comparison on error index} 
	\label{subsec:comp_error}
	When we fix the \gls{TBP} as $S_{\mathrm{con}}= 3.125\times 10^6$ and \gls{SNR} as $\gamma=20\mbox{ dB}$, there are different designs of bandwidth $W$ and duration $T$ that satisfy $WT=S_{\mathrm{con}}$. For each design, the theoretical result of error index $\sigma_{Z}^2$ can be obtained using \eqref{eqn:sigma_waveform}, and $\sigma_{Z}^2$ is demonstrated as a function of $W$ in Fig.~\ref{fig:error_W}. Moreover, the conventional and optimized design are given in \ref{subsec:setting} and marked as a triangle and a circle, respectively. 
	It is shown that the error index of optimized design is 4.0 dB lower than conventional design, and this difference is decided by the \gls{TTC} threshold $\tau_0$ and conventional resolution ratio $\Delta_d/\Delta_v$, as indicated in \eqref{eqn:ratio_error}. The smaller $\sigma_Z^2$ leads to a better \gls{CWS} performance, which is illustrated in the following simulations.

	\begin{figure}[htb]
		\begin{minipage}[b]{1.0\linewidth}
			\centering
			\centerline{\includegraphics[width=8cm]{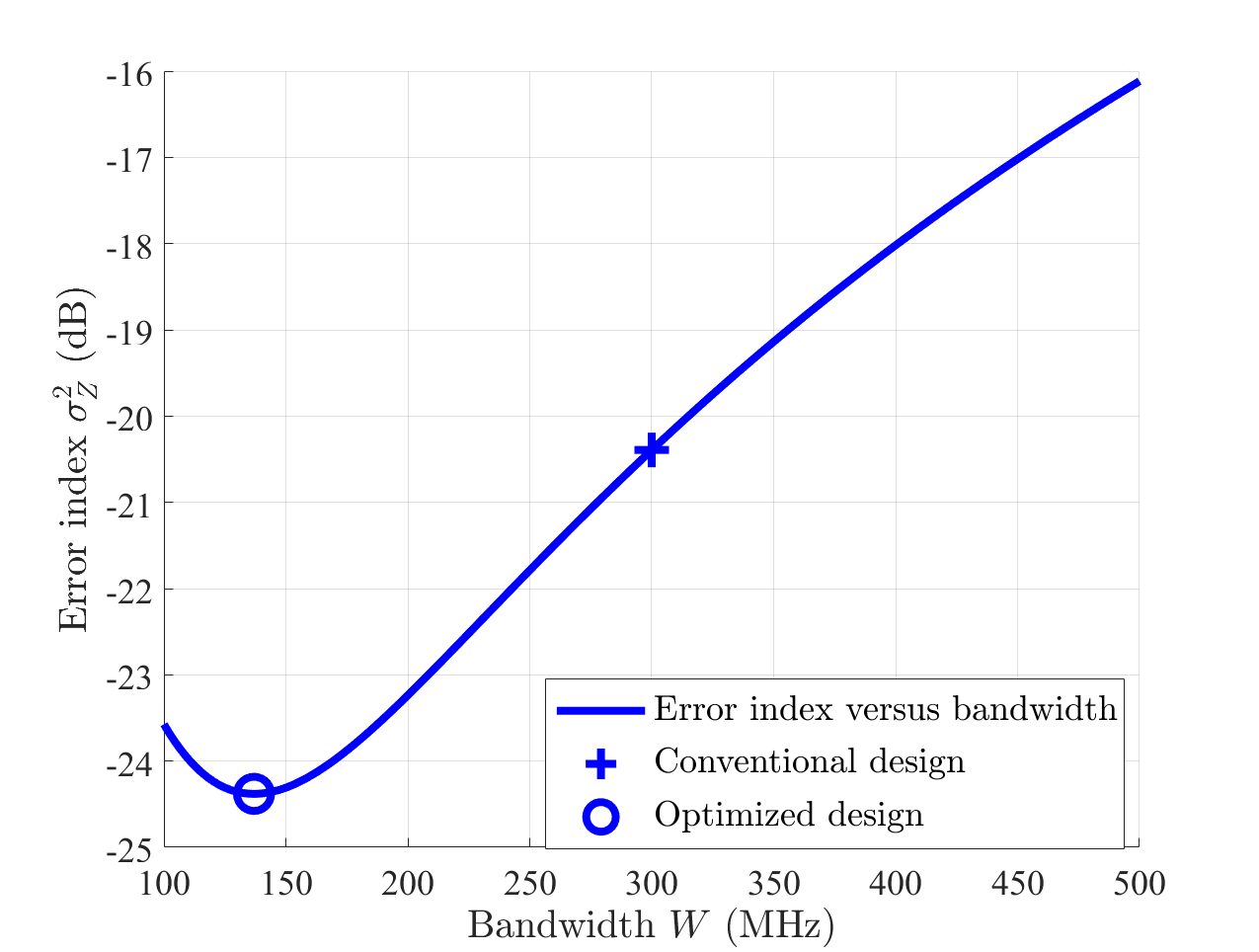}}
		\end{minipage}
		\caption{Error index $\sigma^2_{Z}$ versus bandwidth $W$ with fixed \gls{TBP}.}
		\label{fig:error_W}
		\vspace{-0.4cm}
	\end{figure}
	
	\vspace{-0.3cm}
	\subsection{Comparison on MTWDL versus TBP}
	\label{subsec:comp_TBP}
	Fig.~\ref{fig:loss_TBP} illustrates the \gls{CWS} performance metric as a function of \gls{TBP} and helps compare resource efficiencies of the two designs. Details of this figure are explained in the following.
	
	As for the conventional design (abbreviated as Conv in the legends), the bandwidth and duration are described in Section~\ref{subsec:setting}. Further, \gls{MTWDL} $\tilde{U}$ is obtained in two ways. First, we obtain the simulated results (abbreviated as Sim). \Copy{simulated_results}{We use the signal model \eqref{eqn:y_nm} to generate noisy observations and obtain samples of estimates through \gls{MLE}. These samples are used to get the probability $P_\mathrm{w}(d,v)$ for different thresholds $\lambda$ and then the \gls{TWDL} $U$. The smallest $U$ over $\lambda$ is chosen as the \gls{MTWDL} $\tilde{U}$.} These simulated results are marked as plus signs in Fig.~\ref{fig:loss_TBP}. Second, we obtain the theoretical results (abbreviated as Theo). \Copy{theo_results}{The probability $P_\mathrm{w}(d,v)$ is obtained by substituting $\sigma^2_{Z}$ into (\ref{eqn:pw}), and $U$ with specific $\lambda$ is obtained through integration described in (\ref{eqn:twdl}). Again, $\tilde{U}$ is the smallest $U$.} These theoretical results are marked as crosses in Fig.~\ref{fig:loss_TBP}. 
	
	As for the optimized design (abbreviated as Opt), with \gls{TBP} varying from $1\times 10^6$ to $1\times 10^7$, the corresponding optimized bandwidth and duration are solution of \eqref{eqn:optimization}. Similarly, we obtain both simulated and theoretical results of $\tilde{U}$, which are marked as circles and plotted as curves in Fig.~\ref{fig:loss_TBP}, respectively. 
	
	Our first observation is that our theoretical results fit the simulated results well under this \gls{SNR}. Moreover,  
	when the \gls{TBP} is fixed, the optimized design yields a better performance of the \gls{CWS} in terms of \gls{MTWDL}. 
	On the other hand, to achieve a certain performance of \gls{CWS}, the logarithm of required \gls{TBP} $\log_{10} S$ using the optimized parameter design is $0.40$ less than that using the conventional design. Equivalently, the required \gls{TBP} of optimized design is $0.40$ times that of conventional design while the CWS performance remains unchanged, which conforms to the result \eqref{eqn:ratio_TBP}. This observation indicates the remarkable advantage of the optimized design in resource efficiency. To be specific, in the non-overlapped allocation scheme, the number of radars that can simultaneously operate in the same place with no mutual interference can be increased by $150 \%$ using the optimized design while the \gls{CWS} performance remains the same as the conventional design.

	\begin{figure}[t]
		\subfigure[PWDL 1 and PWDL 2.]{
			\begin{minipage}[t]{1.0\linewidth}
				\centering
				\centerline{\includegraphics[width=8cm]{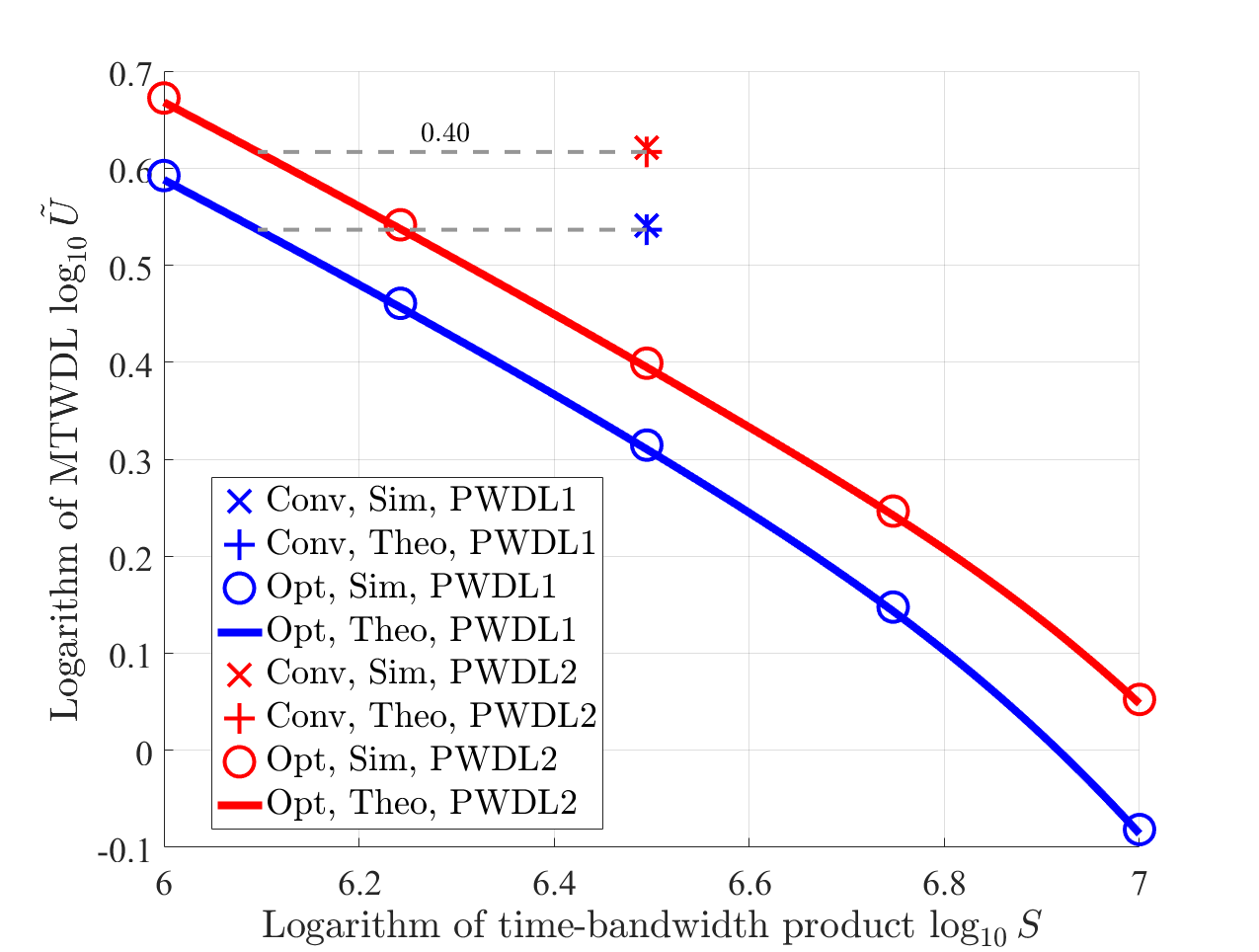}}
		\end{minipage}}
		\subfigure[PWDL 3 and PWDL 4.]{
			\begin{minipage}[t]{1.0\linewidth}
				\centering
				\centerline{\includegraphics[width=8cm]{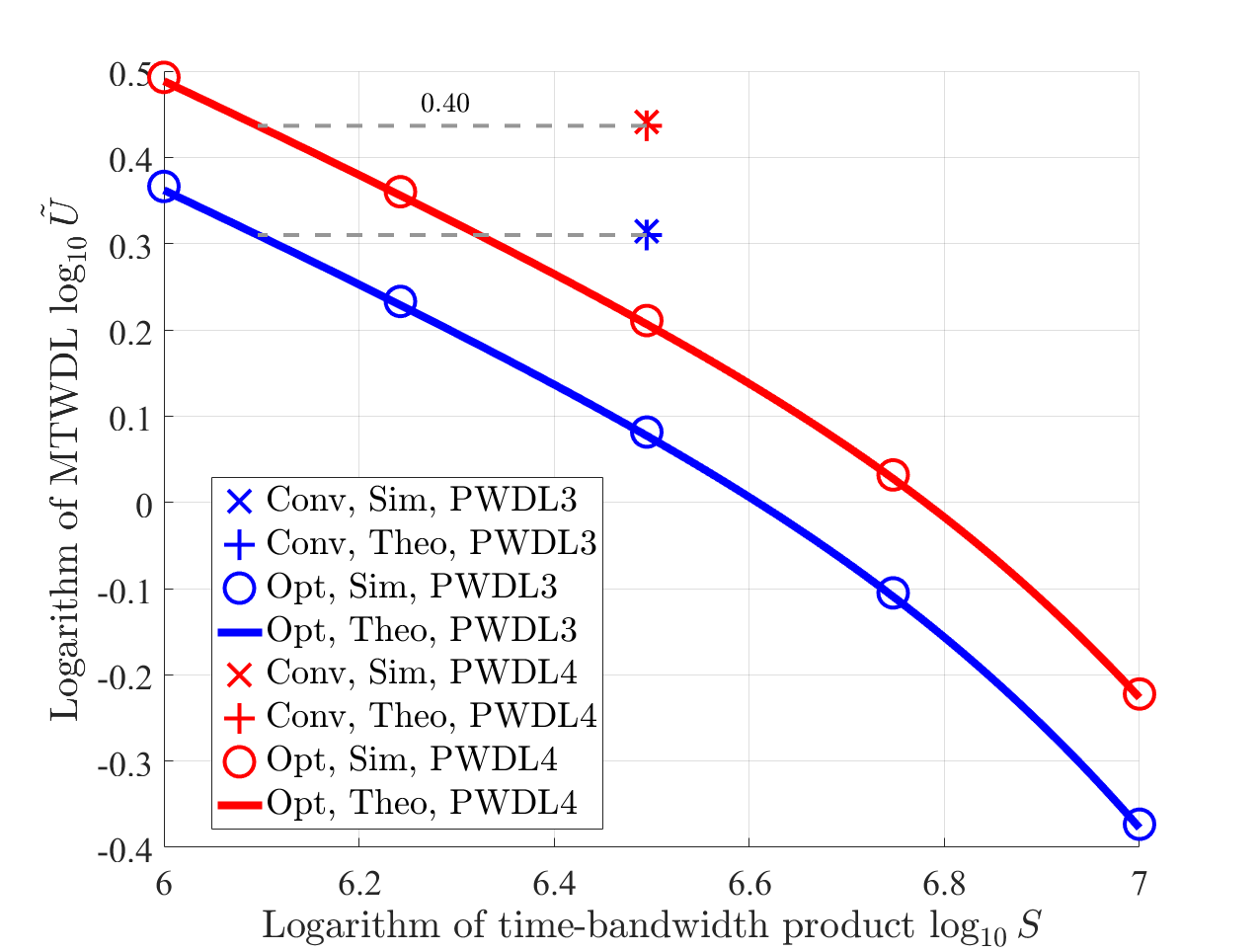}}
		\end{minipage}}
		\caption{Performance metric $\tilde{U}$ as a function of \gls{TBP} $S$.} 
		\label{fig:loss_TBP}
		\vspace{-0.4cm}
	\end{figure}

	\vspace{-0.3cm}
	\subsection{Comparison on MTWDL versus SNR}
	\label{subsec:comp_SNR}
	To demonstrate that the optimized design can reduce radar power, we show \gls{MTWDL} $\tilde{U}$ as a function of \gls{SNR} in Fig.~\ref{fig:loss_SNR}. The fixed \gls{TBP} is $S_\mathrm{con}$. The corresponding conventional and optimized design are 
	given in Section \ref{subsec:setting}. As the \gls{SNR} $\gamma$ varies from 5 dB to 25 dB, the \gls{MTWDL} $\tilde{U}$ are obtained. Similarly to Section~\ref{subsec:comp_TBP}, both simulated and theoretical results are shown.
	
	First, Fig. \ref{fig:loss_SNR} shows that the theoretical analysis fits the simulated results well when the \gls{SNR} is greater than 10 dB. Such an \gls{SNR} level is a common case in practice \cite{hasch2012millimeter}. Moreover, the optimized design outperforms the conventional design in terms of \gls{MTWDL} $\tilde{U}$. On the other hand, with \gls{TBP} fixed, the transmit power of the radar can be reduced by $4.0$ dB to reach a certain system performance using the optimized parameters. This result can be also deduced from \eqref{eqn:ratio_SNR}.
	
	
	\begin{figure}[t]
		\subfigure[PWDL 1 and PWDL 2.]{
			\begin{minipage}[t]{1.0\linewidth}
				\centering
				\centerline{\includegraphics[width=8cm]{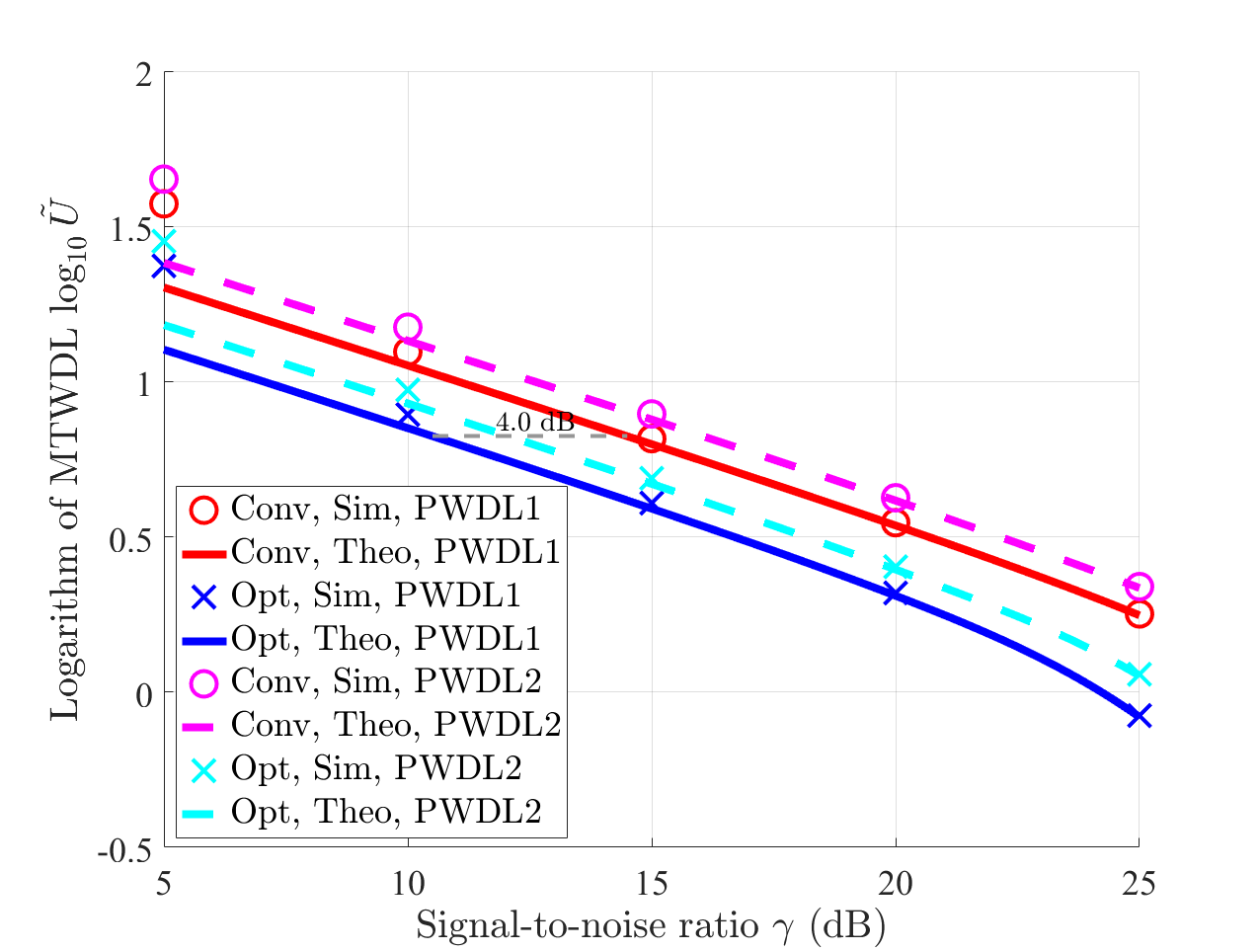}}
		\end{minipage}}
		\subfigure[PWDL 3 and PWDL 4.]{
			\begin{minipage}[t]{1.0\linewidth}
				\centering
				\centerline{\includegraphics[width=8cm]{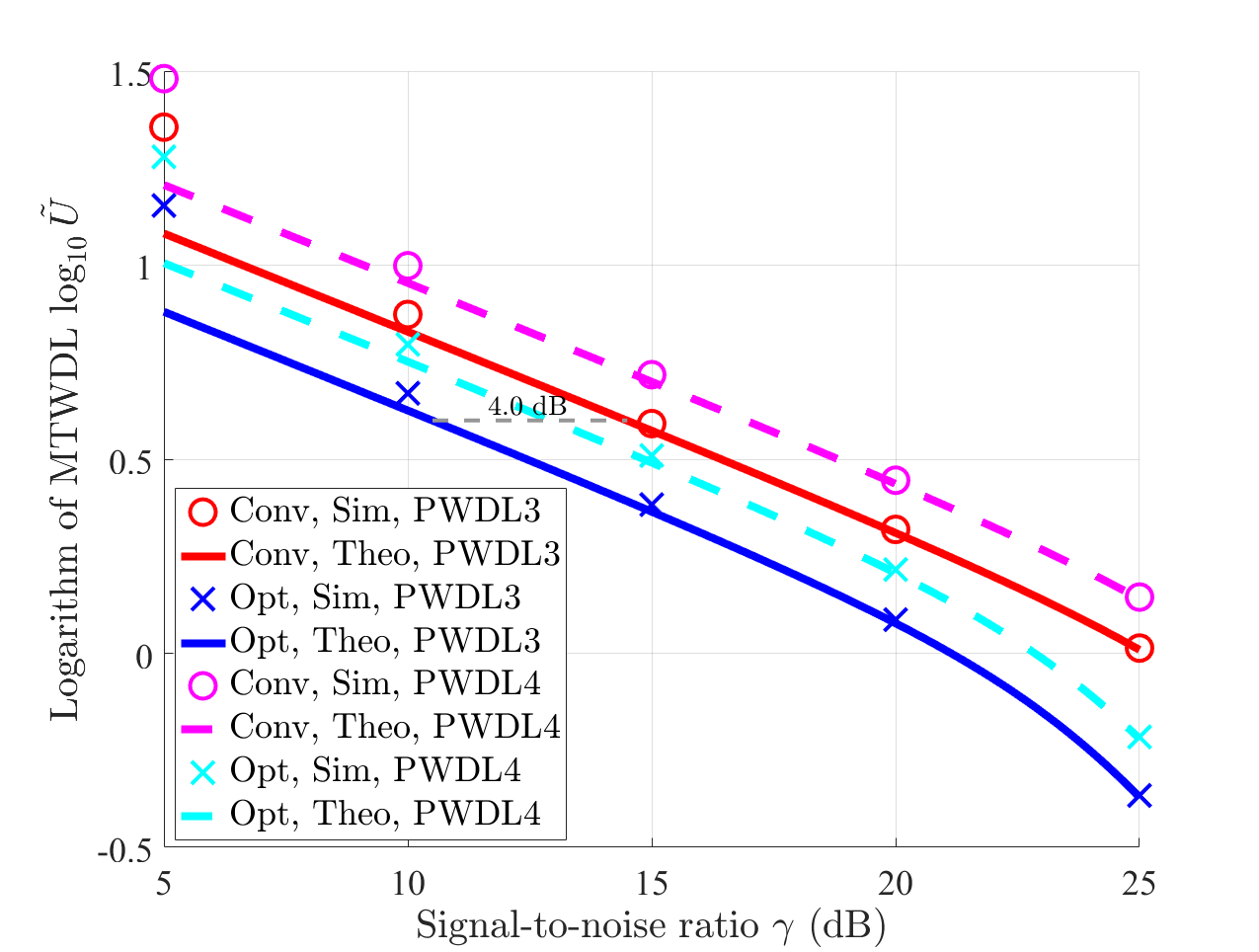}}
		\end{minipage}}
		\caption{Performance metric $\tilde{U}$ as a function of \gls{SNR} $\gamma$.} 
		\label{fig:loss_SNR}
		\vspace{-0.4cm}
	\end{figure}

	
	\section{Conclusion and outlook}
	\label{sec:conclusion}
	Facing the increasing tension of electromagnetic resources on the road, we propose a method of radar parameter design oriented to the performance of \gls{CWS}, which enables the radar to use the limited resources more efficiently. For this purpose, we propose the \gls{MTWDL} as the performance metric of CWS and establish its relationship with the performance of radar. Our analysis shows that in order  to achieve the  optimal performance of the \gls{CWS}, we  design the waveform bandwidth and duration to reduce the defined error index. Accordingly, we formulate the optimization on bandwidth and duration to minimize the error index with limited resources, which is quantified by \gls{TBP}. Both theoretical and numerical results show that the proposed method improves resource or energy efficiency compared with conventional parameter designs. 
	
	
	Establishment of the relationship between the performance of intelligent automotive systems and automotive radar waveform parameters is an important issue. It provides a deeper understanding of how many resources are needed and how to make full use of them. Future studies may involve more kinds of automotive systems, e.g., ACC systems, and give a further insight into the waveform design of automotive radar.
	
	\appendices
	\section{CLRB for estimating range and velocity}
	\label{appendix:CRLB}
	In order to obtain the \gls{CRLB} for estimating $\bm{\theta}_k$ from observations \eqref{eqn:y_nm2}, we first derive the \gls{CRLB} for $\bm{\xi}$, because $\bm{\theta}_k$ is a linear transform of $\bm{\xi}$. 
	The \gls{CRLB} for $\bm{\theta}_k$ is then obtained based on the results of $\bm{\xi}$. Making some approximations on this \gls{CRLB} under the assumption that $M$ is sufficiently large, yields the result in Proposition \ref{prop:crlb}.
	
	When we calculate the \gls{CRLB} for $\bm{\xi}$, our analysis is as follows: First, we calculate the \gls{FIM} for $\bm{\xi}$, which is the inverse of the \gls{CRLB}. Second, we prove that the \gls{FIM} can be approximated as a block diagonal matrix with large $N$ and $M$, whic facilitates the subsequent inverse operation on the \gls{FIM} and implies the \gls{CRLB} for $\bm{\xi}$. 
	

	
	Estimating $\bm{\xi}$ from $N\times M$ observations, as shown in \eqref{eqn:y_nm2}, is a problem of estimating two-dimensional frequencies \cite{hua1992estimating}. As defined in Section \ref{subsec:signal_model}, the parameter to be estimated $\bm{\xi}$ is written as $\boldsymbol{\xi}=[\boldsymbol{\xi}_1^T,\boldsymbol{\xi}_2^T,\dots,\boldsymbol{\xi}_K^T]^T$, where $\boldsymbol{\xi}_k=[b_k,\psi_k,f_{1,k},f_{2,k}]^T$. Here, variables $b_k$ and $\psi_k$ denote the amplitude and initial phase of the $k$-th two-dimensional frequency component, while $f_{1,k}$ and $f_{2,k}$ are its two-dimensional frequencies.
	The \gls{FIM} for $\bm{\xi}$, denoted as $\mathbf{F}$, is a $4K\times 4K$ matrix, which can be divided into $K^2$ $4\times 4$ submatrices, i.e.,
	\begin{equation} \label{eqn:fim_submatrix}
	\mathbf{F} = 
	\begin{bmatrix}
	\mathbf{F}_{11}  &  \cdots\ &\mathbf{F}_{1K}\\
	\vdots   & \ddots  & \vdots  \\
	\mathbf{F}_{K1} & \cdots\ &\mathbf{F}_{KK}\\
	\end{bmatrix}.
	\end{equation}
	According to \cite{hua1992estimating}, the submatrix $\mathbf{F}_{k_1 k_2}$ is given by \eqref{eqn:fim}. 
	\newcounter{mytempeqncnt}
	\begin{figure*}[!t]	
		\normalsize	
		\begin{equation} \label{eqn:fim}
		\mathbf{F}_{k_1 k_2} = \frac{2}{\sigma_w^2}
		\renewcommand\arraystretch{1.5}
		\begin{bmatrix}
		X^{(0,0,0)}_{k_1 k_2} & b_{k_2}X^{(1,0,0)}_{k_1 k_2} & b_{k_2}X^{(1,1,0)}_{k_1 k_2} & b_{k_2}X^{(1,0,1)}_{k_1 k_2}\\
		
		-b_{k_1}X^{(1,0,0)}_{k_1 k_2} & b_{k_1}b_{k_2}X^{(0,0,0)}_{k_1 k_2} &  b_{k_1}b_{k_2}X^{(0,1,0)}_{k_1 k_2} & b_{k_1}b_{k_2}X^{(0,0,1)}_{k_1 k_2}\\
		
		-b_{k_1}X^{(1,1,0)}_{k_1 k_2} & b_{k_1}b_{k_2}X^{(0,1,0)}_{k_1 k_2} &  b_{k_1}b_{k_2}X^{(0,2,0)}_{k_1 k_2} & b_{k_1}b_{k_2}X^{(0,1,1)}_{k_1 k_2}\\
		
		-b_{k_1}X^{(1,0,1)}_{k_1 k_2} & b_{k_1}b_{k_2}X^{(0,0,1)}_{k_1 k_2} &  b_{k_1}b_{k_2}X^{(0,1,1)}_{k_1 k_2} & b_{k_1}b_{k_2}X^{(0,0,2)}_{k_1 k_2}\\
		\end{bmatrix}.
		\end{equation}	
		\hrulefill	
		\vspace*{4pt}	
		\vspace{-0.3cm}
	\end{figure*}
	Here, $X^{(l_0,l_1,l_2)}_{k_1 k_2} (k_1=1,2,\dots,K, k_2=1,2,\dots,K)$ is given by
	\vspace{-0.1cm} 
	\begin{equation}
	\begin{aligned}
	X^{(l_0,l_1,l_2)}_{k_1 k_2} =& \sum_{n=0}^{N-1} \sum_{m=0}^{M-1}(2\pi n)^{l_1}(2\pi m)^{l_2}\\&\dot g_{l_0}(\Delta \psi
	+2\pi \Delta f_1 n+2\pi \Delta f_2 m),
	\end{aligned}
	\vspace{-0.1cm}
	\end{equation}
	where we define $g_{l_0}(x) = \cos(x)$ when $l_0=0$ and $g_{l_0}(x) = \sin(x)$ when $l_0=1$. The three new variables are defined as $\Delta \psi = \psi_{k_1} - \psi_{k_2}$, $\Delta f_1 = f_{1,k_1} - f_{1,k_2}$ and $\Delta f_2 = f_{2,k_1} - f_{2,k_2}$.
	Here, we drop the their dependence of on $k_1$ and $k_2$ for simplicity of notation.
	
	To study the asymptotic property of $\mathbf{F}$, we simplify the expression of $X^{(l_0,l_1,l_2)}_{k_1 k_2}$. First, when $l_1 = l_2 = 0$, we have
	\vspace{-0.1cm}
	\begin{equation} \label{eqn:entry_0}
	\begin{aligned}
	X^{(l_0,0,0)}_{k_1 k_2} \!=& \frac{\sin(\pi N \Delta f_1)}{\sin(\pi \Delta f_1)}
	\frac{\sin(\pi M \Delta f_2)}{\sin(\pi \Delta f_2)}\\
	&\cdot \! g_{l_0}(\Delta \psi
	\!+\! 2\pi\!(N\!-\!1) \Delta f_1 \!+\!2\pi (M\!-\!1) \Delta f_2).
	\end{aligned}
	\vspace{-0.1cm}
	\end{equation}
	When $l_1\neq 0$ or $l_2\neq 0$, 
	$X^{(l_0,l_1,l_2)}_{k_1 k_2} $ is the partial differential of  $X^{(l_0,0,0)}_{k_1 k_2}$ or $X^{(1,0,0)}_{k_1 k_2}$ over $\Delta f_1$ and $\Delta f_2$, given by
	\vspace{-0.1cm}
	\begin{equation} \label{eqn:entry_none_0}
	X^{(l_0,l_1,l_2)}_{k_1 k_2} = (-1)^{l_3} \frac{\partial^{l_1+l_2} X^{(l_0^{(0)},0,0)}}{(\partial \Delta f_1)^{l_1}(\partial \Delta f_2)^{l_2}},
	\vspace{-0.1cm}
	\end{equation}
	where $l_0^{(0)}=l_0+l_1+l_2 \mod 2$ ($p \mod q$ denotes the remainder after division of $p$ by $q$) and $l_3 = \lfloor (l_1+l_2+1-l_0^{(0)})/2 \rfloor$ ($\lfloor x \rfloor$ denotes the nearest integer less than or equal to $x$).
	
	To summarize the existing results, \eqref{eqn:fim_submatrix}, \eqref{eqn:fim}, \eqref{eqn:entry_0} and \eqref{eqn:entry_none_0} jointly give the \gls{FIM} for $\bm{\xi}$, i.e., $\mathbf{F}$. Specifically, \eqref{eqn:fim_submatrix} partitions $\mathbf{F}$ into submatrices; \eqref{eqn:fim} shows the entries of each submatrix; \eqref{eqn:entry_0} and \eqref{eqn:entry_none_0} give the simplified results of these entries.
	
	The \gls{CRLB} is the inverse of $\mathbf{F}$ \cite{kay1993fundamentals}. However, it is difficult to obtain an analytical expression for the inverse. In the following, we show that $\mathbf{F}$ can be approximated as a block diagonal matrix when 
	$N$ and $M$ approach infinity, which enables us to solve the inverse analytically. 
	
	We analyze the entries of $\mathbf{F}_{k_1 k_2}$ shown in \eqref{eqn:fim} when $N$ and $M$ tend to infinity. As a result, we have
	\vspace{-0.1cm}
	\begin{equation} \label{eqn:lim_X}
	\lim\limits_{\genfrac..{0pt}{2}{N \rightarrow \infty}{M \rightarrow \infty}}
	\frac{X^{(l_0,l_1,l_2)}_{k_1 k_2}}{N^{l_1+1}M^{l_2+1}}=
	\frac{(2\pi)^{l_1+l_2}}{(l_1+1)(l_2+1)} \delta_{l_0} \delta_{k_1-k_2},
	\vspace{-0.1cm}
	\end{equation}
	where $\delta_k$ is the discrete delta function, i.e., $\delta_k=1$ when $k=0$ and $\delta_k=0$ for the other values of $k$.
	That is, the limits are non-zero only if $k_1$ equals $k_2$, implying that the non-diagonal submatrices of $\mathbf{F}$ are negligible relative to the diagonal submatrices. Further, we introduce the matrix:
	\vspace{-0.1cm}
	\begin{equation} \label{eqn:Lambda}
	\mathbf{\Lambda}_k=\mbox{diag}\left(\sqrt{NM},b_k\sqrt{NM},b_k\sqrt{N^3M},b_k\sqrt{NM^3}\right).
	\vspace{-0.1cm}
	\end{equation}
	Based on the limits shown in \eqref{eqn:lim_X}, we can get
	\vspace{-0.1cm}
	\begin{equation} \label{eqn:fim_appr}
	\lim\limits_{\genfrac..{0pt}{2}{N \rightarrow \infty}{M \rightarrow \infty}}
	\mathbf{\Lambda}_{k_1}^{-1} \mathbf{F}_{k_1 k_2} \mathbf{\Lambda}_{k_2}^{-1}=
	{2\sigma_w^{-2}\delta_{k_1-k_2}}\mathbf{C},
	\vspace{-0.1cm}
	\end{equation}
	where the constant matrix $\mathbf{C}$ is 
	\vspace{-0.1cm}
	\begin{equation} \label{eqn:C}
	\mathbf{C}=
	\begin{bmatrix}
	1&0&0&0\\
	0&1&\pi&\pi\\
	0&\pi&{4\pi^2}/{3}&\pi^2\\
	0&\pi&\pi^2&{4\pi^2}/{3}\\
	\end{bmatrix},
	\vspace{-0.1cm}
	\end{equation}
	According to \eqref{eqn:fim_appr}, when $N$ and $M$ are sufficiently large, we can approximate the submatrix $\mathbf{F}_{k_1 k_2}$ as $\mathbf{F}^{(\infty)}_{k_1 k_2}$, given by
	\vspace{-0.1cm}
	\begin{equation} \label{eqn:F_infty}
	\mathbf{F}^{(\infty)}_{k_1 k_2} = 
	{2{\sigma_w^{-2}}\delta_{k_1-k_2}}\mathbf{\Lambda}_{k_1}\mathbf{C}\mathbf{\Lambda}_{k_2}.
	\vspace{-0.1cm}
	\end{equation}
	The delta function in \eqref{eqn:F_infty} indicates that the \gls{FIM} $\mathbf{F}$ can be approximated as a diagonal matrix as follows:
	\vspace{-0.1cm}
	\begin{equation} \label{eqn:F_inf}
	\begin{aligned}
	\mathbf{F}^{(\infty)} & =  \diag \left(\mathbf{F}^{(\infty)}_{11}, \mathbf{F}^{(\infty)}_{22}, \dots, \mathbf{F}^{(\infty)}_{KK}\right)\\
	& = 
	{2}{\sigma_w^{-2}}\diag(\mathbf{\Lambda}_{1}\mathbf{C}\mathbf{\Lambda}_{1},
	\mathbf{\Lambda}_{2}\mathbf{C}\mathbf{\Lambda}_{2},\dots,
	\mathbf{\Lambda}_{K}\mathbf{C}\mathbf{\Lambda}_{K}).
	\end{aligned}
	\vspace{-0.1cm}
	\end{equation} 
	Combining \eqref{eqn:Lambda} and \eqref{eqn:C}, the submatrix $\mathbf{\Lambda}_{k}\mathbf{C}\mathbf{\Lambda}_{k}$ is given by
	\begin{equation*}
	\mathbf{\Lambda}_{k}\mathbf{C}\mathbf{\Lambda}_{k} = 
	\begin{bmatrix}
	NM&0&0&0\\
	0&b_k^2 NM&\pi b_k^2 N^2M &\pi b_k^2 NM^2\\
	0&\pi b_k^2 N^2M&\frac{4\pi^2}{3}b_k^2 N^3M&\pi^2b_k^2 N^2M^2\\
	0&\pi b_k^2 NM^2&\pi^2b_k^2 N^2M^2&\frac{4\pi^2}{3}b_k^2 NM^3\\
	\end{bmatrix}.
	\end{equation*}
	
	We have approximated the \gls{FIM} as a block diagonal matrix $\mathbf{F}^{(\infty)}$, as shown in \eqref{eqn:F_inf}. The asymptotic \gls{CRLB} for $\bm{\xi}$ is the inverse of $\mathbf{F}^{(\infty)}$ and it is also a diagonal matrix, given by
	
	\vspace{-0.2cm}
	\begin{equation} \label{eqn:crlb_xi}
	\mathbf{B}_{\xi} = \diag\left(\mathbf{B}_{\xi,1},\mathbf{B}_{\xi,2},\dots,  \mathbf{B}_{\xi,K}\right),
	\vspace{-0.1cm}
	\end{equation}
	where $\mathbf{B}_{\xi,k}$ is the inverse of the counterpart in \eqref{eqn:F_inf}, given by
	\vspace{-0.1cm}
	\begin{equation*}
	\label{eqn:crlb_xi_k}
	\begin{aligned}
	\mathbf{B}_{\xi,k}&=\left({2}{\sigma_w^{-2}}\mathbf{\Lambda}_{k}\mathbf{C}\mathbf{\Lambda}_{k}\right)^{-1}\\
	&=\frac{\sigma_w^2}{2 A^2 \abs{\alpha_k}^2 NM}
	\left[
	\renewcommand\arraystretch{1.5}
	\begin{matrix}
	A^2 \abs{\alpha_k}^2 & 0 & 0 & 0\\
	0 & 7 & -\frac{3}{\pi N} & -\frac{3}{\pi M}\\
	0 & -\frac{3}{\pi N} & \frac{3}{\pi^2 N^2} & 0 \\
	0 & -\frac{3}{\pi M} & 0 & \frac{3}{\pi^2 M^2}
	\end{matrix}\right].
	\end{aligned}
	\vspace{-0.1cm}
	\end{equation*}
	The block diagonal structure of the \gls{CRLB} $\mathbf{B}_{\xi}$  means that the estimation errors of $\bm{\xi}_k$ with different $k$ are asymptotically uncorrelated when $N$ and $M$ are sufficiently large. In other words, the \gls{CRLB} for parameter estimation of one scatterer can be considered not influenced by others. 
	
	As mentioned in Section \ref{subsec:signal_model}, the estimate of range and velocity of the $k$-th scatter, formulated as $\hat{\bm{\theta}}_k = [\hat{d}_k, \hat{v}_k]^T$, is obtained  using the estimate of two-dimensional frequencies, formulated as  $\hat{\bm{f}}_k=[\hat{f}_{1,k},\hat{f}_{2,k}]^T$. To be specific, 
	according to \eqref{eqn:fd_fv} and \eqref{eqn:freq_2d}, $\hat{\bm{\theta}}_k$ is a linear combination of $\hat{\bm{f}}_k$, given by,
	\vspace{-0.1cm}
	\begin{equation}
	\begin{aligned}
	\hat{\bm{\theta}}_k 
	=\frac{c}{2}\left[
	\renewcommand\arraystretch{1.5}
	\begin{matrix}
	N/W & -{1}/{W}\\
	0 & {1}/({f_0 T_0})
	\end{matrix}\right]
	\end{aligned}
	\hat{\bm{f}}_k
	\vspace{-0.1cm}
	\end{equation}
	Thus, the \gls{CRLB} for $\bm{\theta}_k$ is obtained based on the \gls{CRLB} for ${\bm{f}}_k=[{f}_{1,k},{f}_{2,k}]^T$, which is contained in $\mathbf{B}_{\xi}$. As a result, its asymptotic \gls{CRLB} is
	\begin{equation} \label{eqn:crlb_0_theta_k}
	\begin{aligned}
	\mathbf{B}^{(0)}_{\theta,k}=
	\frac{3c^2}{8\pi^2 \gamma_k}
	\left[
	\renewcommand\arraystretch{1.5}
	\begin{matrix}
	\left(1+\frac{1}{M^2}\right)\frac{1}{W^2} & -\frac{1}{f_0 T_0 W M^2} \\
	-\frac{1}{f_0 T_0 W M^2} & \frac{1}{f_0^2 T^2}
	\end{matrix}\right].
	\end{aligned}
	\end{equation}
	
	To simplify the expression of \gls{CRLB} in \eqref{eqn:crlb_0_theta_k}, we conduct some approximations on $\mathbf{B}^{(0)}_{\theta,k}$ for large $M$. First, the \gls{CRLB} of $d_k$ is $\frac{3c^2}{8\pi^2 \gamma_k}\left(1+\frac{1}{M^2}\right)\frac{1}{W^2}$, which can be approximated as $\frac{3c^2}{8\pi^2 \gamma_k}\frac{1}{W^2}$ with large $M$. Second, \Copy{appr_crlb}{the correlation coefficient of estimation errors of $d_k$ and $v_k$ is $\frac{1}{\sqrt{M^2+1}}$ and it approaches $0$ when $M$ is sufficiently large, which means that estimation errors of $d_k$ and $v_k$ are asymptotically uncorrelated. Therefore, the \gls{CRLB} for $\bm{\theta}_k$ can be approximated as the diagonal matrix shown in \eqref{eqn:crlb_theta_k}.
		The result in \eqref{eqn:crlb_theta_k} can also be obtained using the stop-and-hop assumption
		, which is reasonable in automotive applications since the velocities of the vehicle and scatterers are much lower than the speed of light \cite{aydogdu2019radar}.}
	
	\label{subsec:crlb_dv}
	
	\section{Derivations of the GLRT rule}
	\label{appendix:glrt}
	The first step of \gls{GLRT} is to calculate the likelihood ratio $\mathcal{L}_G$ in \gls{GLRT} based on the measurement $\hat{\bm{\theta}}$, defined as \cite{kay1998fundamentals}
	\vspace{-0.1cm}
	\begin{equation}\label{eqn:glrt}
	\mathcal{L}_G\left(\hat{\bm{\theta}}\right) = 
	\left.{\max_{\bm{\theta}_1} p\left(\hat{\bm{\theta}}; \bm{\theta}_1,\mathcal{H}_1\right)}\right/
	{\max_{\bm{\theta}_0} p\left(\hat{\bm{\theta}}; \bm{\theta}_0,\mathcal{H}_0\right)},
	\vspace{-0.1cm}
	\end{equation}
	\Copy{gl}{where $p\left(\hat{\bm{\theta}}; \bm{\theta}_i,\mathcal{H}_i\right)$ denotes the likelihood of $\hat{\bm{\theta}}$ given the true value $\bm{\theta}_i$ and hypothesis $\mathcal{H}_i$.}
	
	The second step is to make decisions by comparing the logarithm likelihood ratio (LLR) with a threshold $\lambda_G$, i.e.,
	\vspace{-0.1cm}
	\begin{equation} \label{eqn:glrt_decision}
	\ln \mathcal{L}_G\left(\hat{\bm{\theta}}\right)
	\mathop  \gtrless \limits_{\mathcal{H}_0}^{\mathcal{H}_1}
	\lambda_G.
	\vspace{-0.1cm}
	\end{equation}
	
	As analyzed in Section \ref{sec:model}, the measurement error $\bm{\varepsilon}$ is assumed to be zero-mean Gaussian with covariance $\bm{\Sigma}_{\bm{\varepsilon}} = \diag(\sigma^2_{d},\sigma^2_{v})$. Thus, the likelihood $p(\hat{\bm{\theta}}; \bm{\theta}_i,\mathcal{H}_i)$ is given by
	\vspace{-0.1cm}
	\begin{equation*}
	\begin{aligned}
	p\left(\hat{\bm{\theta}}; \bm{\theta}_i,\mathcal{H}_i\right) \!=\! \frac{1}{2\pi \sqrt{\mbox{det }\bm{\Sigma}_{\bm{\varepsilon}}}} 
	e^{ -\frac{1}{2}\left(\hat{\bm{\theta}}-\bm{\theta}_i\right)^T
		\bm{\Sigma}_{\bm{\varepsilon}} ^{-1} 
		\left(\hat{\bm{\theta}}-\bm{\theta}_i\right)}.
	\end{aligned}
	\vspace{-0.1cm}
	\end{equation*}
	The LLR is given by
	\vspace{-0.1cm}
	\begin{equation}\label{eqn:glrt_log}
	\begin{aligned}
	\ln \mathcal{L}_G\left(\hat{\bm{\theta}}\right)\! = \!
	-\frac{1}{2}
	\left(
	\min_{\bm{\theta}_1\in \mathcal{R}_1}\left[\left(\hat{\bm{\theta}}-\bm{\theta}_1\right)^T\bm{\Sigma}_{\bm{\varepsilon}} ^{-1}\left(\hat{\bm{\theta}}-\bm{\theta}_1\right)\right] \right.\\
	-\left.
	\min_{\bm{\theta}_0\in \mathcal{R}_0}\left[\left(\hat{\bm{\theta}}-\bm{\theta}_0\right)^T\bm{\Sigma}_{\bm{\varepsilon}} ^{-1}\left(\hat{\bm{\theta}}-\bm{\theta}_0\right)\right] 
	\right).
	\end{aligned}
	\vspace{-0.1cm}
	\end{equation}
	Here, $\mathcal{R}_i$ denotes the set of $\bm{\theta}$ under the hypothesis $\mathcal{H}_i$, i.e., $\mathcal{R}_1 = \left\{\left.[d,v]^T \right| \  d+\tau_0 v<0, d>0\right\}$ and $\mathcal{R}_0 = \left\{\left.[d,v]^T \right| \  d+\tau_0 v\geq 0, d>0\right\}$.
	The right side of \eqref{eqn:glrt_log} involves two minimization problems, whose solutions are given by
	\vspace{-0.1cm}
	\begin{equation*} \label{eqn:s0}
	\begin{aligned}
	s_0\left(\hat{\bm{\theta}}\right) & = 
	\min_{\bm{\theta}_0\in \mathcal{R}_0}\left[\left(\hat{\bm{\theta}}-\bm{\theta}_0\right)^T\bm{\Sigma}_{\bm{\varepsilon}} ^{-1}\left(\hat{\bm{\theta}}-\bm{\theta}_0\right)\right]\\
	& = \left\{
	\begin{aligned}
	-\frac{\hat{d}+\tau_0 \hat{v}}{\sqrt{\sigma_{d}^2+\tau_0^2 \sigma_{v}^2}}, &\quad \hat{d}>0, \hat{d}<-\tau_0 \hat{v},\\
	0, &\quad \hat{d}>0, \hat{d}\geq -\tau_0 \hat{v}.
	\end{aligned}
	\right.
	\end{aligned}
	\end{equation*}
	\begin{equation*} \label{eqn:s1}
	\begin{aligned}
	s_1\left(\hat{\bm{\theta}}\right) & = 
	\min_{\bm{\theta}_1\in \mathcal{R}_1}\left[\left(\hat{\bm{\theta}}-\bm{\theta}_1\right)^T\bm{\Sigma}_{\bm{\varepsilon}} ^{-1}\left(\hat{\bm{\theta}}-\bm{\theta}_1\right)\right]\\
	& = \left\{
	\begin{aligned}
	0, &\quad \hat{d}>0, \hat{d}<-\tau_0 \hat{v}\\
	\frac{\hat{d}+\tau_0 \hat{v}}{\sqrt{\sigma_{d}^2+\tau_0^2 \sigma_{v}^2}}, &\quad \hat{d}\geq -\tau_0 \hat{v},\hat{d}\geq \frac{\sigma_{d}^2}{\sigma_{v}^2 \tau_0}\hat{v}, \\
	\sqrt{{\sigma_{d}^{-2}}{\hat{d}^2}+{\sigma_{v}^{-2}}{\hat{v}^2}}, & \quad \hat{d}>0,\hat{d}< \frac{\sigma_{d}^2}{\sigma_{v}^2 \tau_0}\hat{v}.
	\end{aligned}
	\right.
	\end{aligned}
	\vspace{-0.1cm}
	\end{equation*}
	Combining the results above, we get the expression of the LLR for \eqref{eqn:glrt_log}. Then, 
	we multiply both sides of \eqref{eqn:glrt_decision} by $-2\sqrt{\sigma_{d}^2+\tau_0^2 \sigma_{v}^2}$, yielding $T_G(\hat{\bm{\theta}}) \!=\! -2\sqrt{\sigma_{d}^2\!+\!\tau_0^2 \sigma_{v}^2} \ln \mathcal{L}_G(\hat{\bm{\theta}})$ and $\lambda \!=\! -2\sqrt{\sigma_{d}^2\!+\!\tau_0^2 \sigma_{v}^2}\  \lambda_G$.
	Thus, the \gls{GLRT} rule \eqref{eqn:glrt_decision} can be transformed into \eqref{eqn:glrt_rule}.
	
	\section{Monotonicity of MTWDL in terms of eror index}
	\label{appendix:mono_loss}
	We prove that $\tilde{U}(\sigma_{Z})$ is a monotonically increasing function of $\sigma_{Z}$ under the conditions stated in Theorem \ref{theo:mono}. The proof is divided into two steps: We first prove that with a fixed $\sigma_{Z}$, $\tilde{\lambda}(\sigma_{Z})$ is the only one extremum of $U(\lambda,\sigma_{Z})$. Then, we prove that the derivative ${\mathrm{d}\tilde{U}}/{\mathrm{d}\sigma_{Z}}$ is positive, indicating the increasing monotonicity of $\tilde{U}(\sigma_{Z})$. 
	
	We first divide the \gls{TWDL} shown in (\ref{eqn:twdl}) into two parts:
	\vspace{-0.1cm}
	\begin{equation*}
	\begin{aligned}
	U \!=\! \iint_{D_0} \! P_\mathrm{w}(d,v)  u(d,v) \mathrm{d}d \mathrm{d}v \!+\! \iint_{D_1} \! P_\mathrm{w}(d,v)  u(d,v) \mathrm{d}d \mathrm{d}v, 
	\end{aligned}
	\vspace{-0.1cm}
	\end{equation*}
	where the subdomains $D_0$ and $D_1$ are $D_0: (d,v)\in D, d+\tau_0 v\geq 0$ and $D_1: (d,v)\in D, d+\tau_0 v< 0$.
	According to (\ref{eqn:pw}) and the definition of Q-function  (\ref{eqn:qfunc}), we conclude that $P_w(d,v)$ is a smooth function with respect to $\lambda$ and $\sigma_{Z}$ for all $(d,v)$. Combined with the conditions that $u(d,v)$ is Lebesgue integrable over $D$ and $D$ is bounded, the \gls{TWDL} $U(\lambda,\sigma_{Z})$ is also a smooth function of $\lambda$ and $\sigma_{Z}$.

	As mentioned in Section \ref{subsec:metric}, for a fixed $\sigma_{Z}$, we aim to tune the threshold $\lambda$ to $\tilde{\lambda}(\sigma_Z)$ such that $U$ reaches its minimum. To this end, we treat $U(\lambda,\sigma_{Z})$ as a function of $\lambda$ and conduct partial differentiation $U(\lambda,\sigma_{Z})$ over $\lambda$ and get the first-order and second-order derivative. The results are shown in \eqref{eqn:diff_1} and \eqref{eqn:diff_2}. Here, we define $Z = d+\tau_0 v$. 
	
	\begin{figure*}[!htbp]	
		\normalsize	
		\begin{equation} \label{eqn:diff_1}
		\frac{\partial U}{\partial \lambda}\!=\!\frac{1}{\sqrt{2\pi}\sigma_{Z}}\!
		\left[\!\iint_{D_0}\!u(d,v) e^{-\frac{(Z\!-\!\lambda)^2}{2\sigma^2_{Z}}} \mathrm{d}d \mathrm{d}v 
		\!- \!
		\iint_{D_1} \!u(d,v) e^{-\frac{(Z\!-\!\lambda)^2}{2\sigma^2_{Z}}} \mathrm{d}d \mathrm{d}v\!\right].
		\end{equation}	
		\begin{equation} \label{eqn:diff_2}
		\begin{aligned}
		\frac{\partial^2 U}{\partial \lambda^2}=&\frac{1}{\sqrt{2\pi}\sigma^3_{Z}}
		\!\left[\!\iint_{D_0} u(d,v) e^{-\frac{(Z\!-\!\lambda)^2}{2\sigma^2_{Z}}}\!(Z\!-\!\lambda) \mathrm{d}d \mathrm{d}v 
		- 
		\iint_{D_1} u(d,v) e^{-\frac{(Z\!-\!\lambda)^2}{2\sigma^2_{Z}}}\!(Z\!-\!\lambda) \mathrm{d}d \mathrm{d}v\!\right]\\
		=&\frac{1}{\sqrt{2\pi}\sigma^3_{Z}}
		\iint_{D} u(d,v) e^{-\frac{(Z\!-\!\lambda)^2}{2\sigma^2_{Z}}}\abs{Z} \mathrm{d}d \mathrm{d}v 
		- \frac{\lambda}{\sqrt{2\pi}\sigma^3_{Z}}
		\left[\iint_{D_0} u(d,v) e^{-\frac{(Z\!-\!\lambda)^2}{2\sigma^2_{Z}}} \mathrm{d}d \mathrm{d}v 
		- 
		\iint_{D_1} u(d,v) e^{-\frac{(Z\!-\!\lambda)^2}{2\sigma^2_{Z}}} \mathrm{d}d \mathrm{d}v\right]\\
		=&\frac{1}{\sqrt{2\pi}\sigma^3_{Z}}
		\iint_{D} u(d,v) e^{-\frac{(Z\!-\!\lambda)^2}{2\sigma^2_{Z}}}\abs{Z} \mathrm{d}d \mathrm{d}v 
		-  \frac{\lambda}{\sigma^2_{Z}}
		\frac{\partial U}{\partial \lambda}.
		\end{aligned}
		\end{equation}	
		\hrulefill	
		\vspace*{4pt}
		\vspace{-0.3cm}	
	\end{figure*}
	
	We aim to find the minimum of $U(\lambda,\sigma_{Z})$ with fixed $\sigma_{Z}$ and an extremum is a candidate for minimum. Thus, we analyze the extrema of $U(\lambda,\sigma_{Z})$ and consider any $\lambda_0(\sigma_{Z})$ such that
	\vspace{-0.1cm}
	\begin{equation} \label{eqn:diff_1st_0}
	\left.\frac{\partial U}{\partial \lambda}\right|_{\lambda = \lambda_0,\sigma_{Z}} = 0.
	\vspace{-0.1cm}
	\end{equation}
	Combining (\ref{eqn:diff_2}), (\ref{eqn:diff_1st_0}) and  $u(d,v)>0$, we have
	\vspace{-0.1cm}
	\begin{equation} \label{eqn:diff_2nd_pos}
	\left.\frac{\partial^2 U}{\partial \lambda^2}\right|_{\lambda = \lambda_0,\sigma_{Z}} > 0.
	\vspace{-0.1cm}
	\end{equation}
	This means that all extrema are local minima. With this result and the fact that $U(\lambda,\sigma_{Z})$ is smooth, we can further conclude that $U(\lambda,\sigma_{Z})$ has only one extremum, which is also the global minimum. Therefore, the optimal threshold $\tilde{\lambda}(\sigma_{Z})$ satisfies
	\vspace{-0.1cm}
	\begin{equation} \label{eqn:diff_1st_opt}
	\left.\frac{\partial U}{\partial \lambda}\right|_{\lambda = \tilde{\lambda}(\sigma_{Z}),\sigma_{Z}} = 0.
	\vspace{-0.1cm}
	\end{equation}
	
	Now we have proven that the optimal threshold $\tilde{\lambda}(\sigma_{Z})$ is the only one extremum of $U(\lambda, \sigma_Z)$ for a fixed $\sigma_{Z}$. After $\tilde{\lambda}(\sigma_{Z})$ is obtained, the minimal $U$ is a function of $\sigma_Z$, i.e., $\tilde{U}(\sigma_{Z}) = U(\tilde{\lambda}(\sigma_{Z}),\sigma_{Z})$. Our main concern is the monotonicity of  $\tilde{U}(\sigma_{Z}) $. With the chain rule, its derivative is as follows:
	\vspace{-0.3cm}
	\begin{equation} \label{eqn:diff_opt_loss}
	\frac{\mathrm{d}\tilde{U}}{\mathrm{d}\sigma_{Z}}
	= \left.\frac{\partial U}{\partial \lambda}\right|_{\lambda = \tilde{\lambda}}\cdot 
	\frac{\partial \tilde{\lambda}}{\partial \sigma_{Z}}
	+ \left.\frac{\partial U}{\partial\sigma_{Z}}\right|_{\lambda = \tilde{\lambda}} = 
	\left.\frac{\partial U}{\partial\sigma_{Z}}\right|_{\lambda = \tilde{\lambda}}.
	\vspace{-0.1cm}
	\end{equation}
	Note that the above deduction uses the result \eqref{eqn:diff_1st_opt}. The partial differential $\partial U/\partial \sigma_{Z}$ can be expanded as follows:
	\vspace{-0.1cm}
	\begin{equation} \label{eqn:diff_U_sigma}
	\begin{aligned}
	\frac{\partial U}{\partial \sigma_{Z}}=&\frac{1}{\sqrt{2\pi}\sigma^2_{Z}}
	\left[\iint_{D_0} u(d,v) e^{-\frac{(Z\!-\!\lambda)^2}{2\sigma^2_{Z}}}\!(Z\!-\!\lambda) \mathrm{d}d \mathrm{d}v 
	\right. \\
	&-\left.
	\iint_{D_1} u(d,v) e^{-\frac{(Z\!-\!\lambda)^2}{2\sigma^2_{Z}}}\!(Z\!-\!\lambda) \mathrm{d}d \mathrm{d}v\right].
	\end{aligned}
	\vspace{-0.1cm}
	\end{equation}
	Comparing \eqref{eqn:diff_U_sigma} with the first line of \eqref{eqn:diff_2}, we have
	\vspace{-0.1cm}
	\begin{equation} \label{eqn:diff_opt_loss2}
	\frac{\partial U}{\partial\sigma_{Z}} = \sigma_Z
	\frac{\partial^2 U}{\partial \lambda^2}.
	\vspace{-0.1cm}
	\end{equation}
	Combining (\ref{eqn:diff_2nd_pos}), (\ref{eqn:diff_opt_loss}) and (\ref{eqn:diff_opt_loss2}), we have
	\vspace{-0.1cm}
	\begin{equation}
	\frac{\mathrm{d}\tilde{U}}{\mathrm{d}\sigma_{Z}}
	=\left.\frac{\partial U}{\partial\sigma_{Z}}\right|_{\lambda = \tilde{\lambda}}
	= \left.\sigma_Z
	\frac{\partial^2 U}{\partial \lambda^2}\right|_{\lambda = \tilde{\lambda}}
	> 0.
	\vspace{-0.1cm}
	\end{equation}
	Therefore, $\tilde{U}(\sigma_{Z})$ is an increasing function of $\sigma_{Z}$.

	\section*{Acknowledgment}
	The authors would like to thank Prof. Huadong Meng with California PATH, University of California, Berkeley, for his fruitful discussions and his supports in our early work.
	
	\bibliographystyle{IEEEtran}
	\bibliography{IEEEabrv,2D}
	
	
	
	
	
	

\end{document}